\newcommand{\Z}{\mathbb{Z}}
\newcommand{\poly}{\text{poly}}
\newcommand{\surf}{\ensuremath{\mathcal S}}
\newcommand{\Set}[1]{\left\{ #1 \right\}}
\newcommand{\Setbar}[2]{\Set{#1 \mid #2}}
\definecolor{blueblack}{rgb}{0,0,.7}
\newcommand{\emphdef}[1]{\textcolor{blueblack}{\emph{#1}}}
\title{Computing Shortest Closed Curves\\ on Non-Orientable Surfaces}
\titlerunning{Computing Shortest Closed Curves on Non-Orientable Surfaces}
\author{Denys Bulavka}{Einstein Institute of Mathematics, Hebrew University, Jerusalem 91904, Israel}{denys.bulavka@mail.huji.ac.il}{}{Work done while this author was at Department of Applied Mathematics, Faculty of Mathematics and Physics, Charles University, Prague, Czech Republic. This work was initiated while this author was visiting LIGM, funded by the French National Research Agency grant ANR-17-CE40-0033 (SoS), by the grant no.\ 21-32817S of the Czech Science Foundation (GA\v{C}R) and by Charles University project PRIMUS/21/SCI/014.}
\author{\'Eric Colin de Verdi\`ere}{LIGM, CNRS, Univ.\ Gustave Eiffel, F-77454 Marne-la-Vallée, France}{eric.colin-de-verdiere@univ-eiffel.fr}{}{The work of this author is partly funded by the French National Research Agency grants ANR-17-CE40-0033 (SoS) and ANR-19-CE40-0014 (MIN-MAX).}
\author{Niloufar Fuladi}{LORIA, CNRS, INRIA, Université de Lorraine, F-54000 Nancy, France}{niloufar.fuladi@aol.com}{}{Work done while this author was at LIGM, CNRS, Univ.\ Gustave Eiffel, F-77454 Marne-la-Vallée, France.  The work of this author is partly funded by the French National Research Agency grants ANR-17-CE40-0033 (SoS) and ANR-19-CE40-0014 (MIN-MAX).}
\authorrunning{D.~Bulavka, \'E.~Colin de Verdi\`ere, and N.~Fuladi}
\keywords{Surface, Graph, Algorithm, Non-orientable surface}
\begin{document}

\maketitle

\begin{abstract}
  We initiate the study of computing shortest non-separating simple closed curves with some given topological properties on non-orientable surfaces.  While, for orientable surfaces, any two non-separating simple closed curves are related by a self-homeomorphism of the surface, and computing shortest such curves has been vastly studied, for non-orientable ones the classification of non-separating simple closed curves up to ambient homeomorphism is subtler, depending on whether the curve is one-sided or two-sided, and whether it is orienting or not (whether it cuts the surface into an orientable one).

  We prove that computing a shortest orienting (weakly) simple closed curve on a non-orientable combinatorial surface is NP-hard but fixed-parameter tractable in the genus of the surface.  In contrast, we can compute a shortest non-separating non-orienting (weakly) simple closed curve with given sidedness in $g^{O(1)}\cdot n\log n$ time, where $g$ is the genus and $n$ the size of the surface.

  For these algorithms, we develop tools that can be of independent interest, to compute a variation on canonical systems of loops for non-orientable surfaces based on the computation of an orienting curve, and some covering spaces that are essentially quotients of homology covers.
\end{abstract}

\section{Introduction}

In computational topology of graphs on surfaces, much effort has been devoted to computing shortest closed curves with prescribed topological properties on a given surface.  Most notably, the computation of shortest non-contractible, or shortest non-separating, closed curves on a combinatorial surface has been studied, under various scenarios, in at least a dozen papers in the last twenty years~\cite[Table~23.2]{c-ctgs-18}.  Also, algorithms have been given to compute shortest splitting closed curves~\cite{ccelw-scsh-08}, shortest essential closed curves~\cite{ew-csec-10}, shortest closed curves within some non-trivial homotopy class~\cite{cdem-fotc-10}, and shortest closed curves within a given homotopy class~\cite{ce-tnpcs-10}.  In all these cases, the purpose is to compute a shortest closed curve in a given equivalence class, for various notions of equivalence.

Identifying two closed curves on a given surface whenever there is a self-homeo\-mor\-phism of the surface mapping one to the other is certainly one of the most natural equivalence relations.  In particular, this is the most refined relation if we are only given the input surface, and it is relevant in particular in the context of mapping class groups~\cite[Section~1.3.1]{fm-pmcg-11}.  Under this notion, on an orientable surface, any two simple non-separating closed curves are equivalent: Any non-separating simple closed curve cuts the surface into an orientable surface that has (oriented) genus one less than the original surface and with two boundary components.  However, for non-orientable surfaces, it turns out that the classification is subtler: Excluding some low-genus surfaces, a non-separating simple closed curve can be two-sided (have a neighborhood homeomorphic to an annulus) or one-sided (in which case it has a neighborhood homeomorphic to a Möbius band); furthermore it can be orienting (when cutting along it yields an orientable surface with boundary) or not.  

In this paper, we study the complexity of computing shortest non-separating simple closed curves in non-orientable surfaces, under the constraint of being either one-sided or two-sided, either orienting or not, developing, in passing, tools to handle non-orientable surfaces algorithmically.  Before describing our results in detail, we survey previous works.

\subsection{Previous Works}

One of the most basic and studied questions in topological algorithms for graphs on surfaces is that of computing a shortest non-contractible or non-separating closed curve (the length of such a curve is called \emph{edge-width} in topological graph theory~\cite{ah-irgg-77} or \emph{systole} in Riemannian geometry~\cite{g-frm-83}).  Algorithmically, the simplest setup for graphs on surfaces is that of \emph{combinatorial surfaces}: On a surface~$\surf$, one is given an embedding of a graph~$G$ that is cellular (all faces are homeomorphic to open disks); each edge of the graph has a positive weight.  The goal is to compute a shortest closed walk in~$G$ that is non-trivial on~$\surf$ either in homotopy or in homology.  It turns out that such closed walks are simple, so they are, respectively, a shortest cycle that does not bound a disk on~$\surf$, or that does not separate~$\surf$.

Algorithms for computing shortest non-contractible or non-separating closed curves on surfaces have been developed since the early 1990s~\cite{t-egsnc-90}.  The current fastest algorithm in terms of the size of the input, the number $n$ of vertices, edges, and faces of~$G$, due to Erickson and Har-Peled~\cite{eh-ocsd-04}, runs in $O(n^2\log n)$ time.  However, it is typical to view the genus~$g$ of the surface as a small parameter, and under this perspective it is worth mentioning the algorithms by Cabello, Chambers, and Erickson~\cite{cce-msspe-13}, which runs in $O(g^2n\log n)$ for generic weights, and Fox~\cite{f-sntcd-13}, which runs in $2^{O(g)}\cdot n\log\log n$, but for orientable surfaces only.  We refer to a survey~\cite[Table~23.2]{c-ctgs-18} for many other results.

Other topological properties have also been considered.  Chambers, Colin de Verdière, Erickson, Lazarus, and Whittlesey~\cite{ccelw-scsh-08} study the complexity of computing a shortest ``simple'' closed curve that splits the (orientable) surface into two pieces, neither of which is a disk.  Erickson and Worah~\cite{ew-csec-10} give a near-quadratic time algorithm to compute a shortest essential ``simple'' closed curve on an orientable surface with boundary.  Cabello, DeVos, Erickson, and Mohar~\cite{cdem-fotc-10} provide a near-linear time algorithm (for fixed genus) to compute a shortest ``simple'' closed curve within some (unspecified) non-trivial homotopy class.  In all these problems, one cannot expect in general the output closed curve to be a simple cycle in the input graph~$G$: It sometimes has to repeat vertices and edges of~$G$, but it is \emph{weakly simple}~\cite{cex-dwsp-15} in the sense that it can be made simple by an arbitrary perturbation of the curve on the surface.  In order to store weakly simple curves, both for the output and at intermediate steps of the algorithm, it is convenient to use the dual framework of \emph{cross-metric surface} setting~\cite{ce-tnpcs-10}, in which these curves are really simple.

Very few works devote tools specifically to non-orientable surfaces.  On the combinatorial side, Matou\v{s}ek, Sedgwick, Tancer, and Wagner~\cite{matouvsek2016untangling} carefully describe the result of cutting a non-orientable surface along a simple arc or closed curve, and in particular emphasize that there are various flavors of non-separating simple closed curves.  Very recently, Fuladi, Hubard, and de Mesmay~\cite{fuladi2023short} prove the existence of a canonical system of loops on a non-orientable surface in which each loop has multiplicity $O(1)$, and show that such a system of loops can be computed in polynomial time.  There are some good reasons not to neglect non-orientable surfaces~\cite[Introduction]{fuladi2023short}: Random surfaces are almost surely non-orientable; graphs embeddable on an orientable surface of Euler genus~$2g$ are embeddable on a non-orientable surface of genus $2g+1$, while graphs embeddable on the projective plane can have arbitrarily large orientable genus; non-orientable surfaces appear naturally, e.g., in the graph structure theorem of Robertson and Seymour~\cite{rs-gm16e-03}.

\subsection{Our Results}

We obtain the following results on orienting (simple) closed curves on non-orientable surfaces:%
\begin{restatable}{theorem}{TNphardness}\label{T:Nphardness}
  It is NP-hard to decide, given a cross-metric surface $(\surf,G^*)$ and an integer~$k$, whether a shortest orienting closed curve on~$(\surf,G^*)$ has length at most~$k$.
\end{restatable}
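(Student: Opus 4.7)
The plan is to reduce from a classical NP-hard routing problem on planar graphs---such as Hamiltonicity in cubic planar graphs, or a minimum-weight cycle visiting a prescribed subset of vertices---to the decision version of the shortest orienting curve problem. The topological hinge is the following characterization, which I would first record as a lemma: a simple closed curve $\gamma$ on a closed non-orientable surface $\surf$ is orienting if and only if its $\mathbb{Z}/2$-homology class in $H_1(\surf; \mathbb{Z}/2)$ equals the Poincar\'e dual of the first Stiefel--Whitney class $w_1(\surf) \in H^1(\surf; \mathbb{Z}/2)$; equivalently, $\gamma$ has odd algebraic $\mathbb{Z}/2$-intersection with every one-sided loop of $\surf$. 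In particular, if $\surf$ is built as a sphere with $g$ crosscaps attached, $\gamma$ is orienting iff it has odd intersection with the core of each of the $g$ crosscaps.

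Starting from a planar $3$-regular graph $G = (V,E)$, I would embed $G$ in $S^2$ and attach a crosscap at every vertex $v \in V$ (remove a small open disk and glue a M\"obius band), producing a non-orientable surface $N = N_{|V|}$. The cross-metric graph $G^*$ on $N$ is defined as a gadget-enriched variant of the planar dual of $G$; weights are tuned so that (i)~traversing an original edge of $G$ costs one unit, (ii)~the cheapest way to contribute odd intersection with the core of the crosscap at $v$ is to enter through one face incident to $v$, cross the core, and exit through another face incident to $v$, paying one unit for the corresponding edge of $G$, and (iii)~any extra detour---multiple crossings of the same core, free shortcuts inside a M\"obius band, or travel bypassing the graph---is strictly more expensive.

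Under this setup, a closed walk in $G$ traversing one edge per vertex of $V$ lifts to an orienting curve in $N$ of cost $|V|$, and conversely an orienting curve of cost $\le |V|$ must visit each crosscap an odd, hence positive, number of times, and by minimality exactly once. The visits then chain into a Hamilton cycle of~$G$. So $G$ has a Hamilton cycle iff the shortest orienting curve of $(N, G^*)$ has length at most $|V|$, establishing NP-hardness of the decision problem; the construction is polynomial-time and outputs a cross-metric surface of genus $|V|$.

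The main obstacle is designing the crosscap-neighborhood gadgets so that the cross-metric weights \emph{genuinely} force the optimal orienting curve to follow a Hamilton-cycle pattern. Because the core of a M\"obius band is non-separating, a curve passing through a crosscap can choose to cross the core or go around it; a priori, an orienting curve could exploit this to flip parities cleverly, revisit crosscaps, or thread between far-apart vertices at low cost. The technical heart of the proof is ruling out all such ``cheating'' orienting curves---by carefully placing high-weight barrier edges inside each M\"obius gadget and between non-adjacent faces---and verifying that the unique minimum-cost way to simultaneously realize the odd-intersection condition at every crosscap is a Hamiltonian walk in $G$ itself.
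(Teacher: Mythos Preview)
Your proposal is correct and follows essentially the same approach as the paper: attach a crosscap at every vertex of a planar graph in which Hamiltonicity is NP-hard, use the characterization that an orienting curve must cross the core of every crosscap oddly, and tune weights so that the cheapest orienting curve is forced to trace a Hamilton cycle. The paper reduces from Hamiltonicity in \emph{grid graphs} rather than cubic planar graphs and is more explicit about the gadget---each vertex is replaced by an $8$-cycle to which a M\"obius band (cut from a $4\times4$ grid) is glued, with all M\"obius edges given weight $\varepsilon<1/(12n)$ and the remaining edges weight~$1$, yielding the threshold $n+1/2$---but this is exactly the concrete instantiation of the ``barrier edges inside each M\"obius gadget'' that you identify as the technical heart.
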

\begin{theorem}\label{T:Orienting}
  Given a non-orientable cross-metric surface~$(\surf,G^*)$ of genus~$g$ and size~$n$, we can compute a shortest orienting closed curve in~$(\surf,G^*)$ in $2^{O(g)}\cdot n\log n$ time.  Such a shortest closed curve has multiplicity at most two.
\end{theorem}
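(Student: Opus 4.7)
The plan is to recast the topological constraint ``orienting'' as a homological constraint and then reduce to a shortest-cycle-in-homology-class computation on a covering space of size $2^{O(g)}\cdot n$. First I would establish the topological fact that a weakly simple closed curve $C$ on $\surf$ is orienting if and only if $[C]$ equals the Poincar\'e dual $\mathrm{PD}(w_1)\in H_1(\surf;\mathbb Z/2)$ of the first Stiefel--Whitney class of $\surf$. This is a direct Poincar\'e-duality computation: $C$ is orienting iff its complement is orientable, iff $w_1$ vanishes on loops in the complement, iff $[C]\cdot[\beta]=\langle w_1,[\beta]\rangle\pmod 2$ for every loop $\beta$. The target class $\alpha := \mathrm{PD}(w_1)$ can be computed in $O(gn)$ time using a tree-cotree decomposition of the primal graph to produce a $\mathbb Z/2$-basis of $H_1(\surf;\mathbb Z/2)$ and then identifying $w_1$ on each basis element (equivalently, detecting which basis cycle is one-sided).

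With the problem recast as ``find a shortest closed walk of specified nonzero class $\alpha\in H_1(\surf;\mathbb Z/2)$'', I would then construct the $\mathbb{Z}/2$-homology cover $\hat\surf\to\surf$ (or the quotient of it announced in the abstract). This cover is regular with deck group $(\mathbb Z/2)^g$ and has cross-metric size $2^{O(g)}\cdot n$. Closed walks of class $\alpha$ in $\surf$ correspond exactly to dual paths in $\hat\surf$ joining a lift $\hat v$ to its image under the deck involution $\sigma_\alpha$. Using the standard shortest-cycle-via-cover techniques in the spirit of Erickson--Har-Peled and Cabello--Chambers--Erickson (a shortest-path tree in $\hat\surf$ followed by a non-tree-edge scan to locate the shortest such ``diagonal'' path), the shortest closed walk of class $\alpha$ is found in $2^{O(g)}\cdot n\log n$ time.

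The principal obstacle, and the source of the multiplicity-two bound, is to certify that the output is realised by a weakly simple closed curve in $(\surf,G^*)$ crossing each dual edge at most twice. A shortest diagonal path in the cover is a simple dual path, hence realised by a simple curve in the cross-metric cover surface; projecting back to $\surf$, a dual edge is crossed at most once per sheet of $\hat\surf$ that meets it, which a priori could be $2^g$ times. To push the bound down to two I would use an exchange argument exploiting the order-two deck action $\sigma_\alpha$: any pair of crossings of the same dual edge by $\sigma_\alpha$-related lifts can be locally rerouted without increasing length or altering the $\mathbb Z/2$-class, so a minimiser retains at most two crossings per edge. Making this exchange rigorous while simultaneously preserving weak simplicity and membership in class $\alpha$ is the delicate step I expect to dominate the proof.
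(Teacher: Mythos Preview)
Your high-level strategy---recasting ``orienting'' as a fixed nonzero $\mathbb Z/2$-homology class and then searching a $2^{O(g)}$-sheeted cover---is exactly what the paper does (the class you call $\mathrm{PD}(w_1)$ is precisely the all-ones signature vector with respect to a canonical system of loops, Lemma~\ref{L:top-charact}, and your homology cover is the paper's subhomology cover with $\rho=\mathrm{id}$). The shortest-path-in-cover machinery you invoke is also essentially the paper's Proposition~\ref{P:subroutine}, though the paper is more explicit about avoiding a quadratic blowup: one first computes $2g$ shortest paths meeting every non-contractible closed walk (Lemma~\ref{L:sh-sysloops}) and then runs multiple-source shortest paths (Lemma~\ref{L:multiple-shortest-paths}) in the cover cut along a lift of each such path. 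Your ``shortest-path tree followed by a non-tree-edge scan'' does not obviously achieve the stated running time, since the minimiser need not pass through any fixed vertex.

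The genuine gap is your multiplicity-two argument. The path in the cover is simple, but it may visit any of the $2^g$ sheets, so its projection can cross a dual edge up to $2^g$ times; these crossings lie in sheets differing by arbitrary elements of the deck group $(\mathbb Z/2)^g$, not just by the single involution~$\sigma_\alpha$, and your proposed exchange does not apply to them. The paper's fix is entirely different and much simpler: since the constraint depends only on the \emph{parity} with which $c$ crosses each dual edge (Lemma~\ref{L:homopreserve}), one post-processes the output closed walk by recording, for each edge~$e$, the value $\mu(e)\in\{0,1,2\}$ (zero if unused, one if used an odd number of times, two if used a positive even number of times), and then applies an Euler-circuit construction (Lemma~\ref{L:single-cycle-odd-crossing}) to obtain a single simple cycle in $(\surf,G^*)$ crossing each $e^*$ exactly $\mu(e)$ times. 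This cycle is no longer than the walk, has the same signature, and has multiplicity at most two by construction---no rerouting in the cover is needed, and weak simplicity is automatic.
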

It turns out that orienting curves are always non-separating, and that their sidedness is prescribed by the genus of the surface (see Lemma~\ref{L:top-charact} below).  In contrast, computing shortest non-orienting (simple) closed curves can be done in polynomial time:%
\begin{restatable}{theorem}{TnonOrienting}\label{T:nonOrienting}
  Given a non-orientable cross-metric surface~$(\surf,G^*)$ of genus~$g$ and size~$n$, we can compute a shortest non-separating non-orienting one-sided (respectively, two-sided) closed curve in~$(\surf,G^*)$ in $O(\operatorname{poly}(g)n\log n)$ time.  More precisely:
  \begin{itemize}
      \item We can compute a shortest non-separating non-orienting one-sided closed curve in $O(g^3n\log n)$ time if $g$ is even, and in $O(g^4n\log n)$ time if $g$ is odd;
      \item we can compute a shortest non-separating non-orienting two-sided closed curve in $O(g^4n\log n)$ time if $g$ is odd, and in $O(g^5n\log n)$ time if $g$ is even.
  \end{itemize}
  Such shortest closed curves have multiplicity at most two.
\end{restatable}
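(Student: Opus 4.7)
The plan is to reduce the problem to shortest-non-separating-cycle computations in a polynomial family of low-degree covers of $\surf$, assembled from the canonical system of loops developed earlier in the paper.

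First I would establish a $\mathbb{Z}/2$-homological classification of the target curves. Writing $w_1 \in H^1(\surf;\mathbb{Z}/2)$ for the first Stiefel--Whitney class, a simple closed curve $\gamma$ is non-separating iff $[\gamma]\neq 0$ in $H_1(\surf;\mathbb{Z}/2)$, is one-sided iff $w_1(\gamma)=1$, and is orienting iff $[\gamma]$ equals $\mathrm{PD}(w_1)$, the Poincar\'e dual of $w_1$. A standard cup-product computation gives $\mathrm{PD}(w_1)\cdot\mathrm{PD}(w_1) = g \bmod 2$, so the orienting class is one-sided precisely when $g$ is odd, in agreement with Lemma~\ref{L:top-charact}. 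In the ``easy'' cases---one-sided with $g$ even, or two-sided with $g$ odd---the orienting class has the opposite sidedness of the target, so the non-orienting constraint becomes vacuous and the problem reduces to finding a shortest non-separating closed curve with the prescribed sidedness.

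For those easy cases I would use the orientation double cover $\tilde\surf\to\surf$, possibly fibered with a second double cover tracking another $\mathbb{Z}/2$-cohomology class. One-sided loops lift to arcs from $\tilde v$ to its image under the deck transformation, so a shortest one-sided loop corresponds to a shortest such arc, computed by a multiple-source shortest-path scheme on a cover assembled from the canonical loop system in $O(g^3 n\log n)$ time. The non-separating two-sided case needs a finer enumeration of $\mathbb{Z}/2$-homology classes orthogonal to $w_1$ and yields the $O(g^4 n\log n)$ bound. The weak-simplicity and multiplicity-at-most-two conclusions follow from a standard argument about shortest cycles pulled back through a double cover: each point downstairs is the projection of at most two upstairs points, and shortest paths do not share edges.

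For the ``hard'' cases we must additionally exclude the single orienting homology class. Using the canonical system of loops, I would extract an $O(g)$-sized basis $\chi_1,\ldots,\chi_{g-1}$ of the appropriate subspace of $H^1(\surf;\mathbb{Z}/2)$, and for each index $i$ build the $4$-fold cover associated with $\{w_1,\chi_i\}$, then compute a shortest non-separating closed curve of the required sidedness whose $\chi_i$-value disagrees with $\chi_i(\mathrm{PD}(w_1))$. Since any homology class other than $\mathrm{PD}(w_1)$ must differ from it on at least one $\chi_i$, the minimum over $i$ produces the sought non-orienting curve; the extra factor $g$ yields the $O(g^4 n\log n)$ and $O(g^5 n\log n)$ bounds. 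The main obstacle I expect is keeping each cover of size $O(n)$ and its construction within the stated per-cover cost of $g^{O(1)} n\log n$, instead of the naive $2^{O(g)} n$ overhead of the full $\mathbb{Z}/2$-homology cover; this is precisely where the paper's refined canonical system of loops and the homology-cover quotients are needed, so that the $O(g)$ small covers collectively witness every non-orienting class.
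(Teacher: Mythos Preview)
Your strategy is the paper's strategy in different clothing: your $w_1$ is the paper's parity functional $c\mapsto\sum_ic_i$, your witness classes $\chi_i$ are the coordinate functionals, and your low-degree covers are precisely the subhomology covers of Section~\ref{S:subhomology-covers} with $k\in\{1,2,3\}$, invoked through Theorem~\ref{T:common-tool}. The case split (easy when the orienting class has the wrong sidedness, hard otherwise) and the enumeration over $O(g)$ or $O(g^2)$ witnesses are exactly what the paper does in the proof of Theorem~\ref{T:nonOrienting}.

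Two genuine gaps remain. First, your multiplicity argument is incorrect: the covers in the hard cases are $4$- or $8$-sheeted, so a simple shortest path upstairs may project to a walk traversing an edge of $G$ up to four or eight times, not two. The paper does not bound multiplicity via the cover at all; instead it post-processes the returned closed walk by replacing the traversal count on each edge by $0$, $1$, or $2$ of the same parity and then invoking Lemma~\ref{L:single-cycle-odd-crossing} to extract a single simple cycle in the cross-metric surface. This step preserves $\mathbb{Z}/2$-homology (hence all the properties you care about) and never increases length. Second, your accounting in the hard two-sided case ($g$ even) is off: with only the pair $\{w_1,\chi_i\}$ you can force two-sidedness and $\chi_i(\gamma)\ne\chi_i(\mathrm{PD}(w_1))$, but when $\chi_i(\mathrm{PD}(w_1))=1$ this sets $\chi_i(\gamma)=0$, which does \emph{not} rule out $[\gamma]=0$. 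The paper therefore passes to $k=3$, tracking $\bigl(\sum_\ell c_\ell,\,c_i,\,c_j\bigr)$ with target $(0,0,1)$, so that $c_i=0$ excludes the orienting class and $c_j=1$ excludes the trivial class; iterating over ordered pairs $(i,j)$ gives the $g^2$ factor and the stated $O(g^5n\log n)$ bound. (One can also make a $4$-fold-cover version work by choosing the $\chi_i$ in the annihilator of $\mathrm{PD}(w_1)$, which in fact shaves a factor of~$g$, but that is not what you wrote.)
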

These results implicitly assume that such curves exist (equivalently, $g\ge2$ in the first case and $g\ge3$ in the second one; see Lemma~\ref{L:top-charact}).  They are stated in the cross-metric surface model, but their outputs immediately translate to shortest weakly simple closed curves in the dual combinatorial surface~$(\surf,G)$.

In passing, we develop tools of independent interest for non-orientable surfaces.  First, we give an algorithm to compute an orienting curve in linear time (Proposition~\ref{P:matousek}), refining a construction given by Matou\v{s}ek, Sedgwick, Tancer, and Wagner~\cite{matouvsek2016untangling}.  Second, we introduce an analog of canonical systems of loops for non-orientable surfaces with a new combinatorial pattern, the \emph{standard systems of loops}, and show that we can compute one in asymptotically the same amount of time as canonical systems of loops on orientable surfaces (Proposition~\ref{P:std-sysloops}); we remark that Fuladi, Hubard, and de Mesmay~\cite{fuladi2023short} compute a canonical system of loops in polynomial time, but they do not provide a precise estimate on the running time, which is likely higher than ours.  Such standard systems of loops can be used to compute homeomorphisms between non-orientable surfaces; moreover, some algorithms for surface-embedded graphs rely on canonical systems of loops in the orientable case~\cite{gavoille2023minor,matouvsek2016untangling}, and for non-orientable surfaces they could use standard systems of loops instead. Third, we introduce \emph{subhomology covers}, more compact than the \emph{homology covers} of Chambers, Erickson, Fox, and Nayyeri~\cite{cefn-mcsg-23}%
\footnote{The article by Chambers, Erickson, Fox, and Nayyeri~\cite{cefn-mcsg-23} combines several conference abstracts, including one by Erickson and Nayyeri~\cite{en-mcsnc-11}; the material that we use from~\cite{cefn-mcsg-23} appeared first in~\cite{en-mcsnc-11}.},
but capturing exactly the topological information that we need to classify orienting/non-orienting, one-sided/two-sided closed curves (Section~\ref{S:subhomology-covers}).

\subsection{Techniques and Organization of the Paper}

After the preliminaries (Section~\ref{S:prelims}), we first show that our two algorithmic results (Theorems~\ref{T:Orienting} and~\ref{T:nonOrienting}) follow from a single common statement (Theorem~\ref{T:common-tool}).  In short, the topological properties that we are considering are homological: Knowing the homology class of a closed curve is enough to decide whether it is separating or not, one-sided or two-sided, orienting or not.  Moreover, the non-separating non-orienting curves with given sidedness are characterized by the fact that they belong to the union of $\poly(g)$ affine subspaces of small codimension in the homology group.  The rest of the paper is devoted to the proof of Theorem~\ref{T:common-tool}.  In Section~\ref{S:matousek}, we give a linear-time algorithm to compute an orienting curve. This serves as a first step to compute a standard system of loops in Section~\ref{S:standard-sysloops}. Section~\ref{S:conversion} provides a way to convert homology from a canonical basis to a standard one.  At this point, given a closed curve, we are able to decide whether it has the desired topological properties efficiently.  In Section~\ref{S:subhomology-covers}, we introduce \emph{subhomology covers}, and then show how to compute a shortest path in the subhomology cover that, when projected, will become the desired closed curve on the surface (Section~\ref{S:subroutine}); this uses a third kind of system of loops, the shortest system of loops~\cite{ew-gohhg-05,c-scgsp-10}.  Section~\ref{S:proof-common-tool} concludes the algorithm.  The NP-hardness proof (Theorem~\ref{T:Nphardness}) is a reduction from the Hamiltonian cycle problem in grid graphs, in the same spirit as the NP-hardness proof of computing a shortest splitting closed curve by Chambers, Colin de Verdière, Erickson, Lazarus, and Whittlesey~\cite{ccelw-scsh-08}; it is presented in Section~\ref{S:hardness}.

\section{Preliminaries}\label{S:prelims}

\subsection{Curves and Graphs on Surfaces}

We use standard terminology of topology of surfaces and graphs drawn on them; see, e.g., Armstrong~\cite{a-bt-83} or, for a more algorithmic perspective, Colin de Verdière~\cite{c-ctgs-18}.  In this paper, the \emphdef{genus} of a surface denotes its Euler genus (which equals twice the standard genus for orientable surfaces).  Unless specified otherwise, surfaces are without boundary.

A simple closed curve on a surface $\surf$ is \emphdef{two-sided} if it has a neighborhood homeomorphic to the annulus. Otherwise, it has a closed neighborhood homeomorphic to the M\"{o}bius band and it is called \emphdef{one-sided}.  A simple closed curve is called \emphdef{non-separating} if the surface (with boundary) we obtain by cutting along it is connected; otherwise the curve is \emphdef{separating}. Note that a separating curve is two-sided. A simple closed curve on a non-orientable surface is called \emphdef{orienting} if by cutting along it, we obtain an orientable surface (with boundary).

A \emphdef{homotopy} between two closed curves is a continuous family of closed curves between them.  A closed curve is \emphdef{contractible} if it is homotopic to a constant closed curve.

Finally, a \emphdef{covering space} of $\surf$ is a topological space $\tilde\surf$ with a continuous map $\pi \colon \tilde \surf \rightarrow \surf$ satisfying the local homeomorphism property, i.e., for every point $x$ in $\surf$ there exists an open neighborhood $U$ of $x$ in $\surf$ and pairwise disjoint open sets $U_1,\dots ,U_d$ in $\tilde \surf$ such that $\bigcup_{i=1}^d U_i = \pi^{-1}(U)$ and $\pi$ restricted to each $U_i$ is a homeomorphism with $U$.  A \emphdef{lift} of a path~$p$ on~$\surf$ is a path~$\tilde p$ on~$\tilde\surf$ such that $\pi\circ\tilde p=p$.

\subsection{Combinatorial and Cross-Metric Surfaces}

A \emphdef{combinatorial surface} $(\surf,G)$ is the data of a surface~$\surf$ together with a positively edge-weighted graph~$G$ cellularly embedded on~$\surf$; in this model, the curves (or the edges of the graphs) considered later are restricted to be walks in~$G$.  The length of a curve~$c$ is the sum of the weights of the edges of~$G$ used by~$c$, counted with multiplicity.  A closed curve (or a graph) on a combinatorial surface is \emphdef{simple} if it is actually \emph{weakly simple}; namely, if it admits an arbitrarily small perturbation on~$\surf$ that turns it into a simple closed curve (or graph) on~$\surf$.

Weakly simple closed curves and graphs can traverse the same edge of~$G$ or visit the same vertex of~$G$ more than once.  To keep track of these multiplicities, it is often useful to use the concept of \emph{cross-metric surface}~\cite[Section~1.2]{ce-tnpcs-10}, which refines the notion of combinatorial surface in dual form.  A \emphdef{cross-metric surface} $(\surf,G^*)$ is the data of a surface~$\surf$ together with a positively edge-weighted graph~$G^*$ cellularly embedded on~$\surf$; in this model, the graphs and curves considered later are in general position with respect to~$G^*$.  The length of a path~$p$ is the sum of the weights of the edges of~$G^*$ crossed by~$p$, with multiplicity.  The \emphdef{multiplicity} of a path or closed curve on~$(\surf,G^*)$ is the maximum number of times it crosses a given edge of~$G^*$.  Every (weakly) simple closed curve on the combinatorial surface~$(\surf,G)$ corresponds to a simple closed curve of the same length on the cross-metric surface~$(\surf,G^*)$, and conversely.

One can represent cellular graph embeddings on (possibly non-orientable) surfaces using, e.g., graph-encoded maps~\cite{l-gem-82,e-dgteg-03}.  We represent a graph embedded in the cross-metric surface~$(\surf,G^*)$ by storing its \emph{overlay} (or \emph{arrangement}) with~$G^*$, and our algorithms work on this representation.  The \emphdef{size} of a combinatorial or cross-metric surface is the number of vertices, edges, and faces of the underlying graph.

\subsection{Shortest and Canonical Systems of Loops}

A \emphdef{system of loops} of a surface~$\surf$ is a one-vertex graph~$L$ embedded on~$\surf$ that has a single face, which is homeomorphic to an open disk.  By Euler's formula, $L$ has exactly $g$ loops, where $g$ is the (Euler) genus of~$\surf$.  Cutting $\surf$ along~$L$ results in a $2g$-gon, its \emphdef{polygonal schema}, with edges of its boundary identified in pairs.  The following result follows from earlier works.
\begin{restatable}[\cite{ew-gohhg-05,c-scgsp-10}]{lemma}{Lshsysloops}\label{L:sh-sysloops}
  Let $(\surf,G)$ be a combinatorial surface, orientable or not, of Euler genus~$g$ and size~$n$.  We can compute a set of $2g$~shortest paths on~$G$ such that each non-contractible closed walk in~$G$ intersects at least one of these shortest paths, in $O(gn+n\log n)$ time.
\end{restatable}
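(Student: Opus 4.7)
The strategy is to reduce the problem to constructing a system of loops whose loops are each a concatenation of two shortest tree paths and a single edge, and then output the $2g$ tree paths themselves. I would first run Dijkstra's algorithm from an arbitrarily chosen root vertex~$v$ in $O(n\log n)$ time, obtaining a shortest-path tree~$T$ and the distance $d(v,u)$ to every vertex~$u$. For each non-tree edge $e=xy$, define the loop length $\ell(e)=d(v,x)+w(e)+d(v,y)$, which measures the length of the loop $\lambda_e$ that goes from~$v$ along~$T$ to~$x$, crosses~$e$, and returns along~$T$ from~$y$ to~$v$.

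The heart of the construction is the Erickson--Whittlesey tree--cotree step. I would build a maximum-weight spanning tree of the dual of $G\setminus T$ under weights $\ell(\cdot)$, leaving exactly $g$ edges $e_1,\dots,e_g$ of~$G$ out of both the primal tree~$T$ and the dual tree. A standard Euler-characteristic count (with $g$ the Euler genus, so the argument works identically in the orientable and non-orientable cases) shows that the associated loops $\lambda_{e_1},\dots,\lambda_{e_g}$ form a system of loops based at~$v$: cutting~$\surf$ along their union~$L$ yields a single open disk. Using a union-find based MST this phase runs in $O(n\log n)$ time. The desired $2g$ shortest paths are then declared to be the tree paths $P_i$ and $Q_i$ in~$T$ from~$v$ to the two endpoints of~$e_i$, for $i=1,\dots,g$.

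To verify the covering property, let~$c$ be any closed walk in~$G$ sharing no vertex with any $P_i$ or $Q_i$. Since every vertex of~$L$ lies on some $P_i$ or $Q_i$ (in particular, the endpoints of each $e_i$ are endpoints of $P_i$ and $Q_i$), $c$ is vertex-disjoint, and therefore edge-disjoint, from~$L$, so $c$ sits inside the simply connected open complement $\surf\setminus L$ and is contractible. Contrapositively, any non-contractible closed walk intersects one of the $2g$ paths. The dominant cost is writing out these paths, which takes $O(gn)$ time in the worst case, yielding the claimed $O(gn+n\log n)$ running time. The step requiring the most care is the tree--cotree identity in the non-orientable setting, but Euler's formula with Euler genus resolves this uniformly, so the argument is really the same as in the orientable case.
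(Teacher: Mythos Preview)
Your proof is correct and follows essentially the same approach as the paper: both rely on the Erickson--Whittlesey system of loops, in which each loop decomposes into two shortest-path-tree paths and a single extra edge, and then observe that any closed walk avoiding those tree paths lies in the disk obtained by cutting along the system and is therefore contractible. The only cosmetic difference is that the paper invokes the cited result as a black box in the dual cross-metric surface and translates back, whereas you unpack the construction directly in the primal combinatorial surface; incidentally, for the bare statement of the lemma the maximum-weight choice of cotree is not needed---any cotree would do, since the lemma only requires the $2g$ paths to be shortest paths, not the system of loops to be shortest.
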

\begin{proof}
  We use the following result proved by Erickson and Whittlesey~\cite{ew-gohhg-05} and later extended by Colin de Verdière~\cite{c-scgsp-10}:   Let $(\surf,G^*)$ be a cross-metric surface of Euler genus~$g$ and let $b$ be a point of~$\surf$ not on~$G^*$.  We can compute, in $O(gn+n\log n)$ time, a shortest system of loops based at~$b$ in $(\surf,G^*)$.  Moreover, each loop is the concatenation of two shortest paths with endpoint~$b$, together with a path that crosses a single edge of~$G^*$. 

  Let $(\surf,G^*)$ be the cross-metric surface naturally associated to the combinatorial surface $(\surf,G)$; thus, $G^*$ is the graph dual to~$G$.  Let $b$ be an arbitrary point of~$\surf$ not on the image of~$G^*$.  We apply the above result to $(\surf,G^*)$ and~$b$, obtaining a set of $g$ pairwise disjoint simple loops~$L$ based at~$b$ cutting~$\surf$ into a disk.  Moreover, the set of faces of~$G^*$ visited by the loops is included in the set of faces of~$G^*$ visited by $2g$ shortest paths in~$(\surf,G^*)$, or dually the set of vertices of~$G$ visited by the loops is included in the union of $2g$ shortest paths in~$G$.  Moreover, any non-contractible closed curve in~$G$ must cross at least one loop in~$L$, since every closed curve in a disk is contractible.
\end{proof}

\begin{figure}\def\svgwidth{0.8\linewidth}
    \centering
    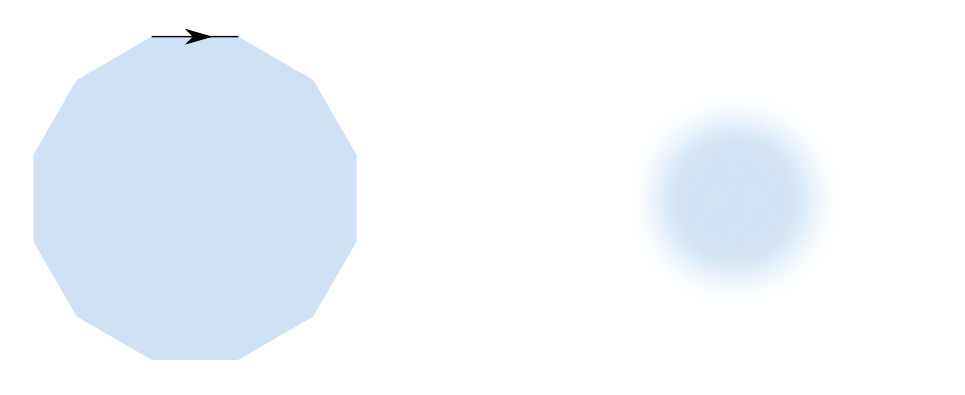
    \caption{Two equivalent views of a canonical system of loops of an orientable surface of Euler genus~6.  Left: A polygonal schema of the form $a_1b_1\bar a_1\bar b_1a_2b_2\bar a_2\bar b_2a_3b_3\bar a_3\bar b_3$.  To reconstruct the surface, one identifies the edges of the polygon in pairs, respecting the orientations of the arrows.  Right: By gluing together the corners of the polygonal schema, we see that all the vertices of the polygon get actually identified into a single vertex on the surface.  The ``half-arrows'' indicate one side of each loop: In the present case, all loops are two-sided, which is reflected by the fact that the right side of a loop when leaving the vertex is still the right side of the loop when it comes back to the vertex.}
    \label{F:canon-orient}
\end{figure}

\begin{figure}\def\svgwidth{0.8\linewidth}
    \centering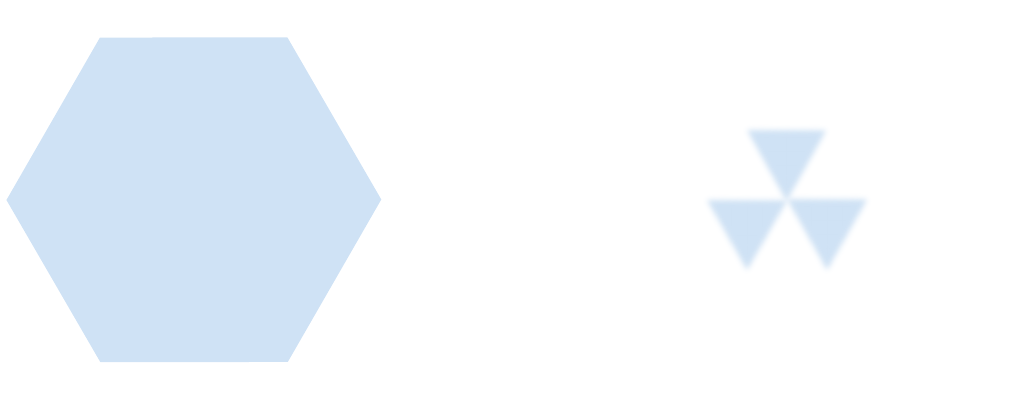
    \caption{Two equivalent views of a canonical system of loops of a non-orientable surface of Euler genus~3.  Left: A polygonal schema of the form $a_1a_1a_2a_2a_3a_3$ with its top (visible) face blue and its bottom (hidden) face white.  Right: The cyclic ordering of the edges around the vertex after identifying the edges of the polygonal schema. All loops are one-sided, which is reflected by the fact that the right side of each loop when leaving the vertex becomes the left side of that loop when it comes back to the vertex.  The colors of the corners indicate the side of the polygon.}
\label{F:canon-nonorient}
\end{figure}

One can describe a polygonal schema (see Figures \ref{F:canon-orient} and~\ref{F:canon-nonorient}) by assigning a symbol and an orientation to each loop in~$L$, and listing the symbols corresponding to the loops encountered along the boundary of the polygonal schema, in clockwise order say, indicating a loop by a bar when encountered with the opposite orientation.  A \emphdef{canonical system of loops for an orientable surface} of Euler genus $2g$ is a system of loops $L$ such that the polygonal schema associated to $L$ has the form $a_1b_1\bar a_1\bar b_1\ldots a_gb_g\bar a_g\bar b_g$ (Figure~\ref{F:canon-orient}).  A \emphdef{canonical system of loops for a non-orientable surface} of genus $g$ is a system of one-sided loops $L$ such that the polygonal schema associated to $L$ has the form $a_1a_1a_2a_2\cdots a_ga_g$ (Figure~\ref{F:canon-nonorient}).  We will use the following result by Lazarus, Pocchiola, Vegter, and Verroust~\cite{lpvv-ccpso-01} to compute a canonical system of loops of an orientable surface:
\begin{lemma}[\cite{lpvv-ccpso-01}]\label{L:lpvv}
  Let $(\surf,G^*)$ be an orientable cross-metric surface with genus $g$ and size~$n$, and let $b$ be an arbitrary point in $(\surf,G^*)$. We can compute a canonical system of loops of $(\surf,G^*)$ based at $b$ in $O(gn)$ time, such that each loop has multiplicity at most four.
\end{lemma}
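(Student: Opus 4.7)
My plan is to first build an arbitrary system of loops based at $b$ and then reshape it into canonical form through a sequence of elementary surface surgeries that mirror the reduction in the classical proof of the classification of surfaces.

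First I would compute an initial system of loops $L_0$ based at $b$ in $O(n)$ time using a tree--cotree decomposition of $G^*$. Concretely, pick a spanning tree $T$ of the dual graph of $(\surf,G^*)$ (whose vertices are the faces of $G^*$) rooted at the face containing~$b$, and a spanning tree $C$ of the subgraph of $G^*$ obtained by removing the edges dual to~$T$. By Euler's formula, the edges of $G^*$ outside both $T^*$ and $C$ dualize to arcs on $\surf$ whose endpoints can be joined to~$b$ along~$T$, producing a system of loops $L_0$ whose complement in $\surf$ is a single disk. Each such loop crosses each edge of $G^*$ at most once, so $L_0$ has multiplicity~$1$.

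Next, I would transform the polygonal schema of $L_0$ into a canonical schema by running the classical polygon reduction algorithm on its boundary word. This is an iterative process with $O(g)$ outer stages, each producing one canonical block $a_i b_i \bar a_i \bar b_i$: identify a linked pair of letters on the boundary of the current polygon, perform a ``handle move'' that reorganizes the four involved arcs into canonical position, then peel off this block and recurse on the rest. Each elementary handle move corresponds to cutting along one loop of the current system and sliding a neighboring loop across the cut; realized directly on the overlay with $G^*$, this surgery takes $O(n)$ time, giving $O(gn)$ in total.

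The main obstacle is the multiplicity bound. A naive implementation of the handle moves concatenates two loops at each step, so after $\Theta(g)$ steps the multiplicity can blow up to $2^{\Theta(g)}$. To obtain multiplicity at most four, I would maintain the invariant that every loop of the current system is a concatenation of at most four arcs of the original boundary word of~$L_0$; since each $L_0$-arc crosses every edge of $G^*$ at most once, this immediately bounds the final multiplicity by four. Ensuring that this invariant is preserved forces one to route each handle move through a specific short concatenation of pieces of the existing schema (rather than through an arbitrary rerouting), and to order the reduction steps so that loops already in canonical position are not touched again. Proving that such a careful schedule exists and can be executed in $O(n)$ per move would be the delicate technical core of the proof.
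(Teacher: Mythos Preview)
The paper does not prove this lemma; it is stated purely as a citation of Lazarus, Pocchiola, Vegter, and Verroust~\cite{lpvv-ccpso-01}, with no argument given. So there is no ``paper's own proof'' to compare against.

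That said, your sketch does track the actual structure of the LPVV algorithm: one starts from a tree--cotree system of loops and then runs a Brahana-style polygon reduction while controlling how many times each loop can run over a given edge. You correctly identify the only nontrivial issue, namely that a naive sequence of cut-and-paste moves blows the multiplicity up exponentially, and that the fix is a carefully ordered schedule of handle moves together with an invariant bounding each current loop by a constant number of pieces of the initial schema. Your write-up is honest that this invariant is the ``delicate technical core'' and that you have not actually established it; as written, this is a plan rather than a proof. A couple of minor points: your initial multiplicity claim is slightly off (a tree--cotree loop may traverse a dual-tree edge twice, once in each direction, so the baseline multiplicity is~$2$, not~$1$), and your phrasing of which graph is primal and which is dual in the tree--cotree step is a bit tangled. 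But the overall shape matches what LPVV actually do, and for the purposes of this paper the lemma is simply quoted, so no proof is expected here.
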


\subsection{Topological Characterizations via Signatures}\label{S:top-charact}

Let $L=(\ell_1,\ldots,\ell_g)$ be any system of loops and let $p$ be a path on~$\surf$ in general position with respect to~$L$.  The \emphdef{signature}~$\sigma(p,L)$ of~$p$ with respect to~$L$ is the vector in~$\Z_2^g$ whose $i$th component is the modulo-$2$ number of crossings of~$p$ with~$\ell_i$.  (Although we will not use this, we remark that $\sigma(p,L)$ expresses the homology of~$p$ in the homology basis dual to~$L$~\cite{cefn-mcsg-23}.)

The following lemma relates the topological type of a curve on a non-orientable surface to its signature with respect to a canonical system of loops.

\begin{restatable}[Schaefer and \v{S}tefankovi\v{c}~\cite{JGAA-580}]{lemma}{Ltopcharact}\label{L:top-charact}
  Let $L$ be a canonical system of loops on a non-orientable surface and $\gamma$ a simple closed curve in general position with respect to~$L$.
  \begin{enumerate}
  \item $\gamma$ is one-sided if and only if $\sigma(\gamma,L)$  has an odd number of $1$ elements;
  \item $\gamma$ is orienting if and only if all the elements in $\sigma(\gamma,L)$ are $1$;
  \item $\gamma$ is separating if and only if all the elements in $\sigma(\gamma,L)$ are $0$.
  \end{enumerate}
\end{restatable}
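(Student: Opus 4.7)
The plan is to interpret $\sigma(\gamma,L)$ as the $\Z_2$-homology class $[\gamma]\in H_1(\surf;\Z_2)\cong\Z_2^g$ in the basis $[\ell_1],\dots,[\ell_g]$, and then to read off each of the three items from algebraic properties of that class. As a first step I would check that the $\Z_2$-intersection pairing is the identity in this basis: distinct loops $\ell_i,\ell_j$ meet only at the common base vertex, and the polygonal schema $a_1a_1\cdots a_ga_g$ places the two ends of each $\ell_i$ consecutively in the cyclic order, so $\ell_i$ and $\ell_j$ do not cross and $[\ell_i]\cdot[\ell_j]=0$ for $i\ne j$; on the other hand each $\ell_i$ is one-sided, so $[\ell_i]\cdot[\ell_i]=1$. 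Non-degeneracy then gives that the $i$-th coordinate of $[\gamma]$ equals $[\gamma]\cdot[\ell_i]$, which by transversality is exactly $\sigma(\gamma,L)_i$, so $[\gamma]$ and $\sigma(\gamma,L)$ coincide as vectors in $\Z_2^g$.

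With this identification, items 3 and 1 should follow quickly from classical facts. A simple closed curve on a surface is separating iff it is $\Z_2$-null-homologous, so for item 3 it suffices to note that $\gamma$ is separating iff $[\gamma]=0$ iff $\sigma(\gamma,L)=0$. For item 1, I would use that $[\gamma]\cdot[\gamma]=0$ when $\gamma$ is two-sided (a parallel push-off is disjoint) and $[\gamma]\cdot[\gamma]=1$ when $\gamma$ is one-sided (any transverse push-off inside a M\"obius-band neighborhood meets $\gamma$ an odd number of times); expanding $[\gamma]\cdot[\gamma]$ in the identity pairing gives $\sum_i\sigma(\gamma,L)_i\pmod 2$, which is $1$ exactly when $\sigma(\gamma,L)$ has an odd number of $1$-entries.

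For item 2, I would argue through the orientation character. The surface obtained by cutting $\surf$ along $\gamma$ is orientable iff every closed curve on $\surf$ disjoint from $\gamma$ is two-sided on $\surf$, i.e., iff $w_1\colon H_1(\surf;\Z_2)\to\Z_2$ vanishes on the image of $H_1(\surf\setminus\gamma;\Z_2)\to H_1(\surf;\Z_2)$. A standard surface argument identifies this image with $[\gamma]^{\perp}=\{[c]:[c]\cdot[\gamma]=0\}$: one inclusion is immediate from transversality, and the other follows by pairing up the transverse intersections of a representative with $\gamma$ (even in number because $[c]\cdot[\gamma]=0$) and performing surgeries along sub-arcs of $\gamma$, which removes each pair without altering the $\Z_2$-homology class. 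By Poincar\'e duality with $\Z_2$ coefficients, $w_1$ is represented by a class $[\alpha]\in H_1(\surf;\Z_2)$ satisfying $w_1([c])=[c]\cdot[\alpha]$; evaluating on the basis gives $[\alpha]=\sum_i w_1([\ell_i])[\ell_i]=\sum_i[\ell_i]$, which is nonzero since $\surf$ is non-orientable. Consequently $\gamma$ is orienting iff $[\alpha]\in([\gamma]^{\perp})^{\perp}=\{0,[\gamma]\}$, and as $[\alpha]\ne 0$ this forces $[\gamma]=[\alpha]=(1,\dots,1)$, i.e., $\sigma(\gamma,L)=(1,\dots,1)$.

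The main obstacle I anticipate is the identification of the image of $H_1(\surf\setminus\gamma;\Z_2)\to H_1(\surf;\Z_2)$ with $[\gamma]^{\perp}$, which requires a careful short argument (the surgery sketch above, or the long exact sequence of $(\surf,\surf\setminus\gamma)$ combined with excision). A fully combinatorial alternative for item 2 would be to work directly on the polygonal schema: cut it along the arcs of $\gamma$ into oriented disks and reduce orientability of the cut surface to bipartiteness of the disk-adjacency graph given by the $a_i$-identifications. Since the lemma is attributed to Schaefer and \v{S}tefankovi\v{c}, deferring to their proof is another option.
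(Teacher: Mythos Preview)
Your proposal is correct and takes a genuinely different route from the paper. The paper proceeds combinatorially: it observes that a canonical system of loops can be split near the basepoint into a \emph{crosscap decomposition} (pairwise disjoint one-sided simple closed curves whose complement is planar), and then defers items~2 and~3 to lemmas of Schaefer and \v{S}tefankovi\v{c} characterising orienting and separating curves via crossing parities with a crosscap decomposition; item~1 is argued directly from the polygonal schema by tracking how a single crossing of any~$\ell_i$ reverses a local orientation. Your approach is instead homological: you identify $\sigma(\gamma,L)$ with $[\gamma]$ using that the $\Z_2$-intersection form is the identity in the basis $[\ell_1],\dots,[\ell_g]$, then derive item~1 from the self-pairing $[\gamma]\cdot[\gamma]$, item~3 from null-homology, and item~2 from the first Stiefel--Whitney class via Poincar\'e duality, showing its dual is $(1,\dots,1)$. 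The paper's route is quicker given the cited lemmas and stays close to the combinatorial language used elsewhere in the article; yours is more self-contained and conceptual, and the identification of $\sigma(\cdot,L)$ with the homology class makes transparent why all three properties are homological. The one step in your argument that deserves care is the equality of the image of $H_1(\surf\setminus\gamma;\Z_2)\to H_1(\surf;\Z_2)$ with $[\gamma]^\perp$; your surgery sketch is standard and works (each surgery adds the boundary of a thin band and removes a pair of crossings), but spelling it out or invoking the long exact sequence of the pair would make the write-up watertight.
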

\begin{proof}
  Given a non-orientable surface $\surf$ of genus $g$, a \emphdef{crosscap decomposition} is a system of simple disjoint one-sided closed curves on $\surf$ such that when $\surf$ is cut along them we obtain the sphere with $g$ boundary components.  This was first introduced by Mohar and is called a \emph{planarizing system of disjoint one-sided curves}, abbreviated as PD1S-system in~\cite{mohar2009genus}. One can turn a canonical decomposition $L$ to a cross-cap decomposition by splitting the loops from the vertex; this can be done close to the vertex such that the crossings between $\gamma$ and the curves in the system do not change after splitting. 

  Let $P$ denote the polygonal schema associated to $L$.
  \begin{enumerate}
    \item By definition of a canonical system of loops (Figure~\ref{F:canon-nonorient}), any loop crossing~$L$ exactly once is one-sided, in the sense that moving along it exactly once reverses the orientation (Figure~\ref{fig:polygon}).  Thus, a simple closed curve~$\gamma$ in general position with respect to~$L$ is one-sided if and only if it crosses~$L$ an odd number of times.

\begin{figure}\def\svgwidth{0.6\linewidth}
    \centering
    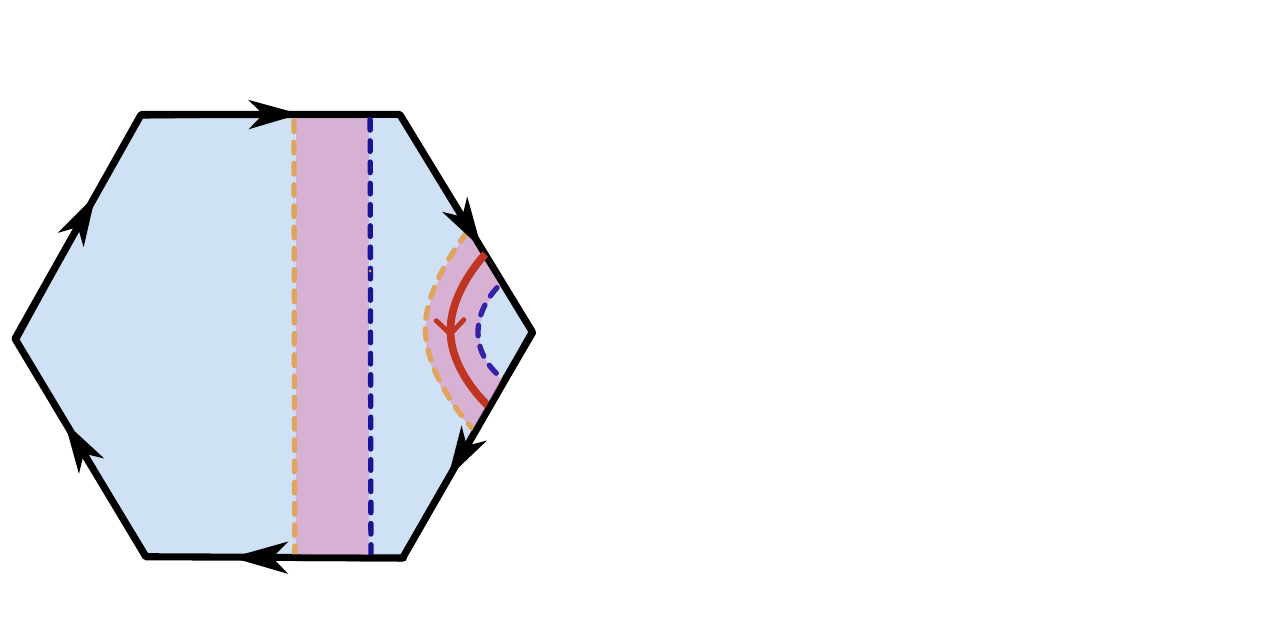
    \caption{Left: A two-sided curve crossing the canonical system of loops an even number of times. Right: A one-sided curve crossing the canonical system of loops an odd number of times.}
    \label{fig:polygon}
  \end{figure}

\item Let $\Theta$ be a canonical decomposition we obtain from $L$ as described above. It is sufficient to show that $\gamma$ crosses each curve in $\Theta$ an odd number of times. By~\cite[Lemma~4]{JGAA-580}, we know that a closed curve is orienting if and only if it crosses each curve in a crosscap decomposition an odd number of times. This finishes the proof.
\item The proof follows the same approach as in the previous case and using~\cite[Lemma~3]{JGAA-580} which states that a closed curve is separating if and only if it crosses each curve in a crosscap decomposition an even number of times. 

As a side note, this statement is true if we replaced $L$ by any system of loops. This is due to the fact that a curve that crosses each loop in such a system an even number of times is null-homologous and a simple closed curve is null-homologous if and only if it is separating.\qedhere
\end{enumerate}
\end{proof}

  Hence, orienting curves are non-separating, and their sidedness is prescribed by the genus.
 
\section{Main Technical Result}\label{S:common-tool}

All our algorithms will use the following theorem.  As a motivation, we first show how Theorems~\ref{T:Orienting} and~\ref{T:nonOrienting} follow from it, while postponing its proof until the end of the paper.  
\begin{theorem}\label{T:common-tool}
  Let $(\surf,G^*)$ be a cross-metric surface of Euler genus~$g$ and size~$n$.  Let $k$ be an integer, $\rho:\Z_2^g\to\Z_2^k$ a linear map, and $A\subseteq\Z_2^k$. 

  Then, for some (unspecified) canonical system of loops~$L$ that depends only on~$(\surf,G^*)$, we can compute in $O(g^38^kkn\log n)$ time a shortest closed curve~$c$ on~$(\surf,G^*)$ such that $\rho(\sigma(c,L))\in A$ (if such a curve exists).  This curve is simple and has multiplicity at most two.
\end{theorem}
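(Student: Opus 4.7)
The plan is to reduce the problem to finding shortest paths between specific pairs of vertices in an auxiliary covering space. Since $\rho(\sigma(c,L))$ depends only on the $\Z_2$-homology class of $c$ (through its $L$-signature), I would partition closed curves by the value of $a=\rho(\sigma(c,L))\in\Z_2^k$ and, for each of the at most $2^k$ targets $a\in A$, compute the shortest closed curve with $\rho$-signature exactly $a$; the overall answer is the minimum.

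First I would invoke the machinery built in the preceding sections to produce the canonical system of loops $L$ itself in an implicit but usable form: compute an orienting curve (Proposition~\ref{P:matousek}), upgrade it to a standard system of loops (Proposition~\ref{P:std-sysloops}), and apply the change-of-basis of Section~\ref{S:conversion} so that, for every dual edge of $G^*$, I can compute its contribution to $\sigma(\cdot,L)$; pushing these contributions through $\rho$ gives an efficiently computable $\Z_2^k$-label on each edge of $G^*$.

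Next I would use these edge labels to build the subhomology cover $\tilde\surf_\rho$ of Section~\ref{S:subhomology-covers}: a $2^k$-sheeted covering space whose sheets are indexed by $\Z_2^k$ and in which traversing an edge of $G^*$ with label $\ell$ moves between sheets $s$ and $s+\ell$. Its defining property is that a closed curve $c$ with $\rho(\sigma(c,L))=a$ lifts, from any lift $\tilde x$ of its basepoint, to a path ending at $\tilde x+a$. Thus, finding a shortest closed curve with $\rho$-signature $a$ reduces to finding a shortest path in $\tilde\surf_\rho$ between some vertex and its translate by $a$ under the deck action. To avoid trying all $2^k n$ basepoints, I would apply Lemma~\ref{L:sh-sysloops} to produce $2g$ shortest paths that meet every non-contractible closed curve, and then, via a standard uncrossing / three-path exchange argument (in the spirit of Cabello--Chambers--Erickson), argue that some shortest curve with $\rho$-signature $a$ crosses one of these shortest paths along a single edge $e$; accordingly, it lifts to a shortest path in $\tilde\surf_\rho$ between the two endpoints of a lift of $e$ whose sheets differ by $a$. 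It then suffices, for each $a\in A$ and each of the $2g$ shortest paths, to run the multiple-source shortest-path (MSSP) subroutine of Section~\ref{S:subroutine} on $\tilde\surf_\rho$ from the edges lying along (lifts of) that path.

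The main obstacle will be two-fold. First, squeezing the combined cost into $O(g^3\cdot 8^k\cdot k n\log n)$: each MSSP run takes time roughly proportional to $(2^k n)\log(2^k n)$ times a polynomial factor in $g$ and $2^k$ (the cover has $2^k n$ vertices, so Dijkstra-like steps cost a $k+\log n$ factor), and the bound should emerge from $O(g)$ base paths $\times$ polynomial overhead per MSSP $\times$ the $2^k$ iteration over $a\in A$, with the three factors of $2^k$ in $8^k$ coming from the cover size, the MSSP propagation across sheets, and the range of target classes. Second, certifying the multiplicity-two bound for the output: because the shortest path in $\tilde\surf_\rho$ realizing the minimum is a simple path, its projection can cross each edge of $G^*$ at most twice (once per pair of adjacent sheets it visits), and a short case analysis at the closing edge $e$ confirms weak simplicity of the projected closed curve, mirroring the orientable arguments.
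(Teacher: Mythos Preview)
Your overall architecture matches the paper's: compute a standard system of loops, convert to canonical coordinates via Lemma~\ref{L:change-basis}, label the edges of the dual graph, build the $2^k$-sheeted subhomology cover, intersect with the $2g$ shortest paths of Lemma~\ref{L:sh-sysloops}, and run MSSP in the cover. Two of your steps, however, do not go through as written.

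\medskip
\textbf{The structural lemma.} The claim that a shortest closed curve with $\rho$-signature~$a$ ``crosses one of these shortest paths along a single edge~$e$'' via a three-path/uncrossing argument is not justified. Uncrossing a curve~$c$ against a shortest path~$p$ produces two closed curves $c_1,c_2$ with $\rho(\sigma(c_1,L))+\rho(\sigma(c_2,L))=\rho(\sigma(c,L))$, and there is no reason either summand equals~$a$; so you cannot iterate to a single crossing while preserving the constraint. The paper does something different (Lemma~\ref{L:subroutine}): working \emph{in the cover}, it replaces the prefix of the lifted path~$\tilde c$ up to its last meeting with a fixed lift~$\tilde p_i$ by the corresponding subpath of~$\tilde p_i$. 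This keeps the endpoints (hence the $\rho$-signature, by Lemma~\ref{l:endpoint}), does not increase length (lifts of shortest paths are shortest paths), and yields a path that begins along~$\tilde p_i$ and is otherwise disjoint from it. One then cuts~$\tilde\surf$ along~$\tilde p_i$, attaches a disk, and runs MSSP from that single face. No single-edge-crossing property is asserted or needed.

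\medskip
\textbf{Multiplicity two and simplicity.} Your argument that ``the projection of a simple path in~$\tilde\surf_\rho$ crosses each edge of~$G^*$ at most twice'' is false: the cover has $2^k$ sheets, each edge has $2^k$ lifts, and the projected closed walk can traverse a given edge of~$G$ up to $2^k$ times. The paper does \emph{not} get multiplicity two from the cover at all. Instead, after Proposition~\ref{P:subroutine} returns a closed walk~$c$ in~$G$, it defines $\mu(e)\in\{0,1,2\}$ from the parity (and positivity) of the number of traversals of~$e$ by~$c$, applies Lemma~\ref{L:single-cycle-odd-crossing} to produce a \emph{simple} cycle~$c'$ in the cross-metric surface crossing each $e^*$ exactly $\mu(e)$ times, and invokes Lemma~\ref{L:homopreserve} to conclude $\sigma(c',L)=\sigma(c,L)$. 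This post-processing, not any property of the lifted path, is what yields both simplicity and the multiplicity bound.
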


We emphasize that the canonical system of loops~$L$ is not provided in the input of the algorithm, and is actually never computed explicitly.

\begin{proof}[Proof of Theorem~\ref{T:Orienting}, assuming Theorem~\ref{T:common-tool}]
  Recall that a curve is orienting if and only if it crosses every loop of a canonical system of loops an odd number of times (Lemma~\ref{L:top-charact}).  Thus, to compute a shortest orienting closed curve, we apply Theorem~\ref{T:common-tool} with $k=g$, $\rho$ the identity, and $A$ a single vector, the all-ones vector.
\end{proof}

\begin{proof}[Proof of Theorem~\ref{T:nonOrienting}, assuming Theorem~\ref{T:common-tool}]
  Let $L=\Set{\ell_1,\ldots,\ell_g}$ be the canonical system of loops from Theorem~\ref{T:common-tool} (which we know exists, even if we do not compute it).  Let $c$ be a closed curve.  A loop~$\ell_i$ is \emph{even} (with respect to~$c$) if $\ell_i$ and~$c$ cross an even number of times.  Similarly, $\ell_i$ is \emph{odd} if $\ell_i$ and~$c$ cross an odd number of times.  The \emph{oddity number} of~$c$ is the number of odd loops with respect to~$c$. We first consider the problem of computing shortest (non-separating) non-orienting, one-sided closed curves:
  \begin{itemize}
  \item If $g$ is even, recall from Lemma~\ref{L:top-charact} that a closed curve is non-separating, non-orienting, and one-sided if and only if its oddity number is odd.  To compute a shortest such curve, it suffices to apply Theorem~\ref{T:common-tool} with $k=1$, $\rho(c_1,\ldots,c_g)=\sum_i c_i$, and $A=\Set{1}$.
  
  \item If $g$ is odd, recall from Lemma~\ref{L:top-charact} that a closed curve is non-separating, non-orienting, and one-sided if and only if its oddity number is odd, but different from~$g$.  To compute a shortest such curve, we do the following.  For every $i=1,\ldots,g$, we apply Theorem~\ref{T:common-tool} with $k=2$, $\rho(c_1,\ldots,c_g)=(c_1+\ldots+c_g,c_i)$, and $A=(1,0)$.  This computes a shortest closed curve with odd oddity number such that the $i$th loop of~$L$ is even.  We return the shortest curve over all $i=1,\ldots, g$.
  \end{itemize}
  There remains to prove the theorem for non-separating, non-orienting, two-sided closed curves:
 \begin{itemize}
  \item If $g$ is odd, recall from Lemma~\ref{L:top-charact} that a closed curve is non-separating, non-orienting, and two-sided if and only if the oddity number is even and positive.  To compute a shortest such curve, we do the following.  For every $i=1,\ldots,g$, we apply Theorem~\ref{T:common-tool} with $k=2$, $\rho(c_1,\ldots,c_g)=(c_1+\ldots+c_g,c_i)$, and $A=(0,1)$.  This computes a shortest closed curve with even oddity number such that the $i$th loop of~$L$ is odd.  We return the shortest curve over all $i=1,\ldots, g$.
 
  \item If $g$ is even, recall from Lemma~\ref{L:top-charact} that a closed curve is non-separating, non-orienting, and two-sided if and only if the oddity number is even, positive, and different from~$g$.  To compute a shortest such curve, we do the following.  For every $i\neq j\in\Set{1,\ldots,g}$, we apply Theorem~\ref{T:common-tool} with $k=3$, $\rho(c_1,\ldots,c_g)=(c_1+\ldots+c_g,c_i,c_j)$, and $A=(0,0,1)$.  This computes a shortest closed curve with an even oddity number such that the $i$th loop of~$L$ is even and the $j$th loop of~$L$ is odd.  We return the shortest curve over all $i,j$.\qedhere
\end{itemize}
\end{proof}

\section{Computing an Orienting Curve}\label{S:matousek}

\begin{restatable}{proposition}{Pmatousek}\label{P:matousek}
    Let $(\surf,G^*)$ be a non-orientable cross-metric surface of size~$n$. We can compute an orienting curve on $(\surf,G^*)$ with multiplicity at most two in time $O(n)$.
\end{restatable}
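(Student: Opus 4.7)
My plan is to realize an orienting curve as a weakly simple closed curve whose $\Z_2$-homology class equals the Poincar\'e dual of the first Stiefel-Whitney class $w_1 \in H^1(\surf;\Z_2)$. Lemma~\ref{L:top-charact} tells us that a simple closed curve on a non-orientable surface is orienting if and only if it has this homology class, so topologically the task reduces to exhibiting any simple closed representative of $\mathrm{PD}(w_1)$; algorithmically we must do so in linear time, with multiplicity at most two.

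First, I would compute a spanning tree $T^*$ of $G^*$ by BFS in $O(n)$ time and, using a single tree traversal, propagate a local orientation from an arbitrary root face of $G$ to every face. An edge $e^* \in E(G^*)\setminus T^*$ is called \emph{bad} if the propagated local orientations of its two incident faces disagree across $e^*$; the set $B$ of bad edges is identified in $O(n)$. A standard obstruction argument identifies $\chi_B$ with a $\Z_2$-cocycle on $G^*$ whose class is $w_1$, so in particular $B$ is nonempty since $\surf$ is non-orientable.

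Next, I would turn $B$ into a curve. The cocycle condition ensures that every face $f$ of $G^*$ contains an even number of bad edges on its boundary, so I can pair them up by a non-crossing matching of the cyclic boundary order and draw chord arcs inside $f$ connecting matched pairs. Concatenating these arcs through the bad edges yields a disjoint union $\gamma$ of simple closed curves that crosses $G^*$ precisely along $B$, with each bad edge crossed exactly once; by construction $[\gamma] = \mathrm{PD}(w_1)$ in $H_1(\surf;\Z_2)$. All of this takes $O(n)$ time since the work per face is proportional to its size.

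The hard part is to output a \emph{single} connected simple closed curve with multiplicity at most two in linear time, rather than a disjoint union. Any null-homologous component of $\gamma$ is separating and can be suppressed by re-pairing the bad edges in the faces it meets; the remaining components all contribute nontrivially to $\mathrm{PD}(w_1)$, and I would merge them pairwise by band-sum surgeries along paths in $T^*$: routing the curve along both sides of such a path fuses two components into one while doubling the crossings on the dual edges of that path. To keep the total work $O(n)$ and the final multiplicity at most two, I would schedule all surgeries during one DFS traversal of $T^*$ so that each tree edge belongs to at most one surgery route, with the components of $\gamma$ merged in the order they are first encountered by the DFS.
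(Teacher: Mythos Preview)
Your core construction coincides with the paper's: orient the faces of the dual graph~$G$, declare an edge bad when the two adjacent face orientations disagree, and observe that a simple closed curve crossing each bad edge with odd multiplicity is orienting (you phrase this as $[\gamma]=\mathrm{PD}(w_1)$; the paper instead exhibits an explicit orientation on the complement). The whole difference lies in how to get a \emph{single} simple closed curve of multiplicity at most two. The paper sets $\mu=1$ on bad edges and $\mu=2$ on \emph{every} other edge, so that the support is all of~$G$ and hence connected, and then one call to the Eulerian-type Lemma~\ref{L:single-cycle-odd-crossing} produces a single simple curve with exactly these multiplicities in linear time.

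Your merging step, by contrast, does not work as written. There is a primal/dual slip: $T^*$ is a spanning tree of~$G^*$, but the components of~$\gamma$ live in \emph{faces} of~$G^*$, so a surgery arc between two components is indexed by the faces it traverses---a walk in~$G$---not by a path in~$T^*$; the phrase ``doubling the crossings on the dual edges of that path'' betrays this confusion. Even under the intended reading (surgery arcs that cross only edges of~$T^*$), nothing prevents such an arc inside a face of~$G^*$ from being forced to cross another chord arc of~$\gamma$: in a face with cyclic boundary $b_1,b_2,b_3,b_4,t_1,t_2$ and matching $(b_1,b_4),(b_2,b_3)$, the inner chord arc is separated from every tree edge by the outer one. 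You do not address this, the DFS scheduling claim that each tree edge is used at most once is not substantiated, and the ``suppress null-homologous components by re-pairing'' step is both vague and unnecessary (band sums preserve $\Z_2$-homology anyway). The clean repair is exactly the paper's move: raise the multiplicity of every edge outside~$B$ to~$2$ and let Lemma~\ref{L:single-cycle-odd-crossing} do the merging for you.
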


Matou\v{s}ek, Sedgwick, Tancer, and Wagner~\cite[Proposition 5.5]{matouvsek2016untangling} proved the existence of such curve. We improve on their argument by providing a linear-time algorithm for its computation. We will use the following lemma (below and in Section~\ref{S:proof-common-tool}), which is a variation of a very classical result: Every connected graph with even degrees has an Eulerian cycle.

\begin{restatable}{lemma}{Lsinglecycleoddcrossing}\label{L:single-cycle-odd-crossing}
Let $(\surf,G)$ be a combinatorial surface of size~$n$.  Let $E$ be the set of edges of~$G$.  Let $\mu:E\to\Z_+$ be a map such that (i) for each vertex~$v$ of~$G$, the sum of the values of~$\mu(e)$, for each $e$ incident to~$v$, is even, and (ii) the subgraph of~$G$ induced by the edges~$e$ such that $\mu(e)\ge1$ is connected.  Then one can compute a simple cycle in the dual cross-metric surface~$(\surf,G^*)$ such that, for each edge $e$ of~$G$, the cycle crosses the dual edge~$e^*$ exactly $\mu(e)$ times, in time linear in $n$ plus the sum of the values of~$\mu$.
\end{restatable}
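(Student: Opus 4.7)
The plan is to reduce this to the classical Eulerian-circuit theorem on an auxiliary multigraph, and then realize the circuit on the surface as a non-self-crossing closed curve.

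First, I would form the multigraph~$M$ on~$V(G)$ by taking, for each edge~$e$ of~$G$ with $\mu(e)\ge 1$, exactly~$\mu(e)$ parallel copies of~$e$. By hypothesis~(i) every vertex of~$M$ has even degree, and by~(ii) $M$ is connected, so Hierholzer's algorithm produces an Eulerian circuit of~$M$ in time $O(|V(M)|+|E(M)|)=O(n+\sum_e\mu(e))$.

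Each edge-traversal in the Eulerian circuit will become one crossing of the sought simple cycle with the corresponding dual edge~$e^*$ of~$G^*$, and each visit to a vertex~$v$ of~$G$ will be realized as a short arc inside the face~$v^*$ of~$G^*$ joining the two incidences used at that visit. This yields a closed curve on~$\surf$ that crosses each~$e^*$ exactly $\mu(e)$ times, and simplicity (non-self-crossing in $(\surf,G^*)$) is equivalent to asking that, at every vertex~$v$, the pairings induced by the transitions of the circuit form a non-crossing matching in the cyclic order of slots at~$v$ inherited from the rotation of~$G$. To enforce this I would use a rotation-aware variant of Hierholzer: maintain, at each vertex, the unused half-edges of~$M$ in a cyclic doubly linked list in rotation order, and whenever the walk enters~$v$ via a half-edge~$h$, exit via the half-edge immediately after~$h$ in this list. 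Every visit to~$v$ thus consumes two cyclically adjacent unused slots, so the pairings at~$v$ form a nested system of brackets, which is non-crossing. The Hierholzer splicing step---inserting a later-discovered sub-circuit into the current partial circuit at a shared vertex---preserves this nested structure, because the sub-circuit's transitions at the splice vertex all lie strictly inside the pair of slots being replaced.

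The main obstacle is to verify rigorously that this rotation-aware variant produces a single Eulerian circuit with non-crossing transitions at every vertex, and to implement it in linear time. Both follow from standard arguments: the doubly linked lists give constant-time access to ``next unused slot after~$h$'' and constant-time removal, while Hierholzer's splicing preserves both Eulerianness and, as sketched, the nested bracket structure, which an induction on the number of spliced sub-circuits turns into a full correctness proof. Finally, drawing the corresponding crossings and short arcs in the overlay representation of $(\surf,G^*)$ is direct and adds $O(n+\sum_e\mu(e))$ time, yielding the claimed bound.
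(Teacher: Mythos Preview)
Your reduction to an Eulerian circuit in the multigraph~$M$ is the right framework, and you are correct that at every vertex that is \emph{not} the start of a (sub-)walk, the rule ``exit via the next unused half-edge after the entering one'' pairs two slots that are adjacent among the remaining ones, which indeed yields a non-crossing matching. The gap is at the start vertex. There the first exit~$g_0$ is removed from the list before any pairing takes place, and the closing transition $(h_{\mathrm{last}},g_0)$ is \emph{not} produced by the ``next unused'' rule; nothing forces it to be non-crossing with the intermediate pairs. Concretely, take $v_0$ of degree~six with rotation $A,B,C,D,E,F$, attach degree-two vertices $v_A,\ldots,v_F$ via these edges, and add edges $v_Av_C$, $v_Dv_F$, $v_Bv_E$ (this embeds on the torus, with $\mu\equiv 1$). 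Starting at~$v_0$ and exiting via~$A$, your rule forces the walk to return via~$C$, exit via~$D$, return via~$F$, exit via~$B$ (since $A$ has been deleted from the list, the successor of~$F$ is~$B$), and finally return via~$E$. The pairings at~$v_0$ are then $(C,D)$, $(F,B)$, and the closing pair $(E,A)$; the last two chords cross. Your splicing claim inherits the same problem, since the splice vertex is the start vertex of the sub-circuit, and in addition the assertion that ``the sub-circuit's transitions all lie strictly inside the pair of slots being replaced'' is not justified: the still-unused half-edges at the splice vertex can be spread across several gaps of the existing matching.

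The paper sidesteps this difficulty by not insisting on a single circuit up front. It first reduces to $\mu\in\{0,1\}$ and to $\mu$-degree at most four in each face of~$G^*$, then draws, independently in each such face, one or two disjoint arcs joining the relevant edge midpoints. This local step automatically gives non-crossing transitions everywhere and produces a family of pairwise disjoint simple cycles; only afterwards are these cycles merged, by rerouting two cycles inside a common face of the overlay (which keeps the curve simple and does not change any crossing count). Your argument would be repaired by the same two-phase structure: use any local non-crossing pairing at each vertex to get several simple cycles, then merge them across shared faces; trying to force a single non-crossing Eulerian circuit directly via Hierholzer is where the proof breaks.
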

We emphasize that, in Condition~(ii), the considered subgraph cannot have isolated vertices, because it is induced by a set of edges, even though $G$ itself may have vertices incident only to edges~$e$ such that $\mu(e)=0$.
\begin{proof}
     In the following proof we will use interchangeably the $\mu$-value of an edge and its dual. The \emph{$\mu$-degree} of a vertex, as well as of a face, is the sum of the $\mu$-values of its incident edges.
     
    First, without loss of generality we can assume that $\mu(e)\in \{0,1\}$: Indeed, otherwise, we subdivide each dual edge $e^*$ whose $\mu$-value is higher than $1$ into $\mu(e)$ edges, each with $\mu$-value equal to $1$. Moreover, we can assume that each face of $G^*$ has $\mu$-degree $0$, $2$ or $4$: Indeed, otherwise, we iteratively add a diagonal in a face of the dual graph of $\mu$-degree at least six separating a part with $\mu$-degree three from the rest, and set $\mu$ to $1$ on the newly added diagonal; the hypotheses of the lemma still hold.
    
    Within each face in $G^*$ of $\mu$-degree $0$, $2$ or $4$, we draw $0$, $1$ or $2$ disjoint simple paths, respectively, connecting the middle of each boundary edge with $\mu$-value $1$ in an arbitrary way.  All these paths together form simple, pairwise disjoint cycles on the surface, but not necessarily a single cycle. To remedy this, we need to merge the cycles.  At a high level, we perform a search in the \emph{cycle graph}, the graph whose vertices are the cycles, and in which two vertices are adjacent if the corresponding cycles have a face in the dual graph (of the overlay of the cycles and~$G^*$) incident to both of them.  Then, in a second step, we merge the cycles together, by reconnecting each non-root cycle~$c$ to its parent cycle at the location where $c$ was discovered.

    To see that the cycle graph is connected, note that the edges of~$G^*$ cut the cycles into \emph{pieces}.  Let two pieces be \emph{adjacent} if either they are consecutive pieces of the same cycle, or they share a face in the dual graph of the overlay of the cycles and~$G^*$.  By~(ii), this graph is connected.
    
    In more detail, we perform, e.g., a depth-first search rooted at an arbitrary vertex.  To perform the depth-first search, two ingredients are needed: (1) we need to be able to compute the neighbors of a given cycle~$c$ (this is doable in time linear in the number of crossings of~$c$ with~$G^*$), (2) we need to be able to mark cycles as explored or unexplored (for this purpose, with a linear-time preprocessing step, we make each edge that is a piece of a cycle~$c$ point to a separate data structure representing the cycle).  Initially all cycles are unexplored.  We start by exploring the root cycle, and as soon as we discover an unexplored cycle, we mark it as explored and recursively explore it.  This takes linear time.  Whenever a cycle~$c$ is discovered, we take note of the dual face of the overlay of~$G^*$ and of the cycles where this discovery happens.
  
    In the second step, for each of these faces, we reroute $c$ and its parent cycle to merge them into a single cycle, see Figure~\ref{merge_cycles}.  At the end of this process, all cycles are merged into a single one, which crosses every edge of~$G^*$ exactly once, as desired.
\end{proof}
\begin{figure}
  \centering\includegraphics[width=0.8\textwidth]{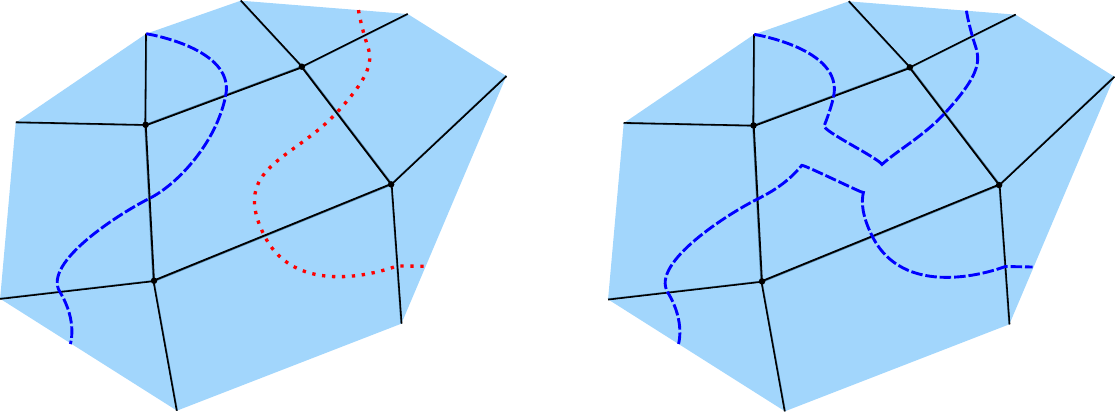}
  \caption{Merging of two disjoint cycles inside a face without increasing the number of intersections with the graph.}
\centering
\label{merge_cycles}
\end{figure}

\begin{proof}[Proof of Proposition~\ref{P:matousek}]
  Let $G$ be the graph dual to~$G^*$.  We first choose an arbitrary orientation of every face of~$G$.  Let $I$ be the set of \emph{inconsistent} edges of~$G$, which bound two faces of~$G$ with inconsistent orientations.  If we start traversing the faces incident to a vertex of $G$ by going around the vertex, in a complete turn, we must change orientations an even number of times to get back to the orientation of the initial face; this implies that each vertex of $G$ is incident to an even number of edges in $I$. Thus the subgraph of~$G$ made of the edges in~$I$ has all its vertices of even degree, and cutting the surface along it yields an orientable surface.  Moreover, we can compute $I$ in linear time.  See Figure~\ref{f:orientation}.  Now, for each edge $e$ of~$G$, let $\mu(e)=1$ if $e\in I$ and $\mu(e)=2$ otherwise.  We apply Lemma~\ref{L:single-cycle-odd-crossing} in the combinatorial surface~$(\surf,G)$ (the hypotheses are obviously satisfied) to compute, in linear time, a simple cycle~$c$ in the cross-metric surface~$(\surf,G^*)$ that crosses each edge~$e$ of~$G^*$ exactly $\mu(e)$ times. 
  
  There remains to prove that $c$ is orienting.  For this purpose, it suffices to exhibit an orientation of each face of the overlay of~$G^*$ and~$c$ in such a way that the inconsistent edges are exactly those arising from~$c$.  We do this as follows.  We first orient the faces of the overlay that touch a vertex of~$G^*$ in the same way as the corresponding face of~$G$ was oriented in the beginning of the proof.  Let $f$ be a face of~$G^*$.  The \emph{subfaces} of~$f$ are the faces of the overlay of~$G^*$ and~$c$ that lie inside it.  Starting from an arbitrary subface of~$f$ that is already oriented, we propagate the orientation to all subfaces of~$f$, in such a way that the subedges of~$c$ are inconsistent (see Figure~\ref{f:orientation}, right).  There is a unique way to do this, because the dual of the subfaces of~$f$ is a tree, and moreover these orientations are compatible with the already selected orientations of the subfaces of~$f$ touching a vertex of~$G$, because each edge~$e^*$ of~$G^*$ is crossed an odd number of times by~$c$ if and only if $e$ lies in~$I$.  By construction, the subedges of~$G^*$ are consistent.
\end{proof}
\begin{figure}
  \centering\includegraphics[width=\textwidth]{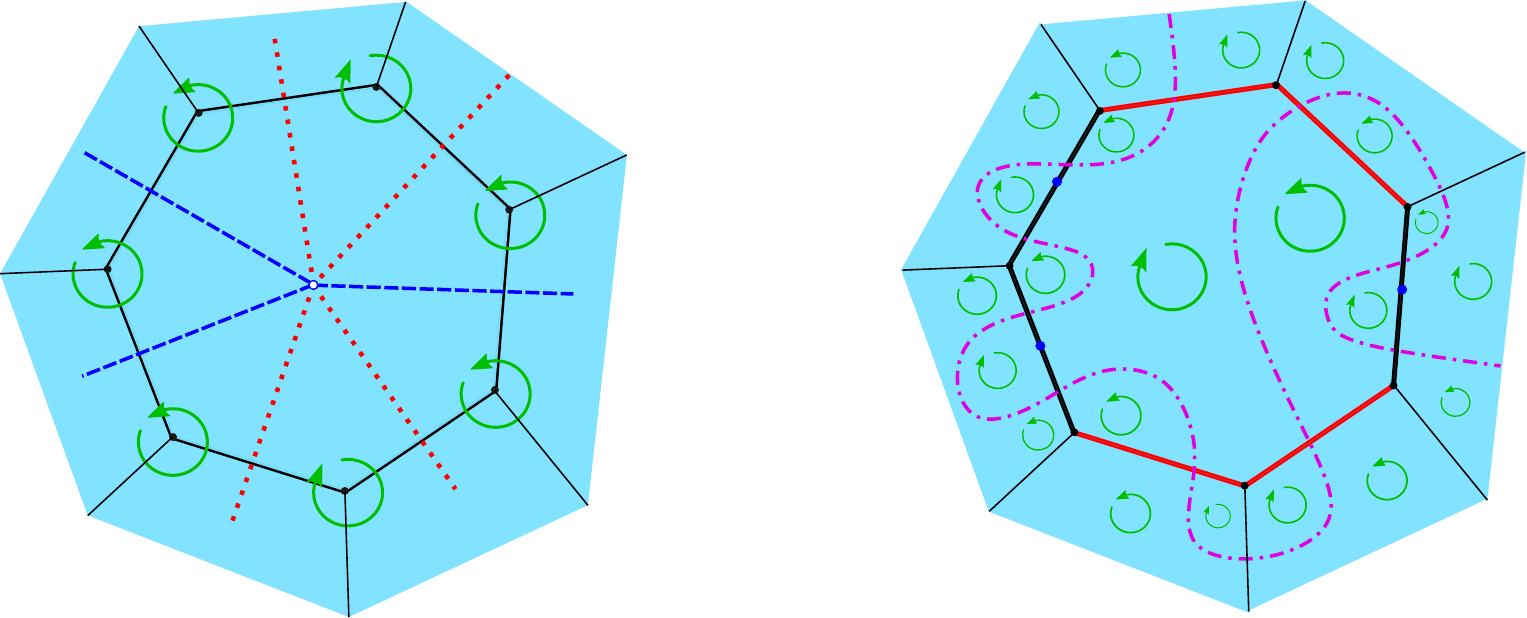}
  \caption{Left: $G$ and $G^*$ with arbitrary orientations of the faces of $G$ and the inconsistent edges of~$G$ (in red, dotted lines); the dual graph~$G^*$ is denoted in black lines. Right: an example of a cycle passing through all the edges and the derived orientation.}
\centering
\label{f:orientation}
\end{figure}
\section{Computing a Standard System of Loops}\label{S:standard-sysloops}

In this section, we introduce standard systems of loops and show that we can compute one efficiently.  We believe that this result can be of independent interest:  It would perhaps be more natural to compute a canonical system of loops; but on non-orientable surfaces it is only known how to compute one in polynomial time~\cite{fuladi2023short} (the precise worst-case running time is certainly larger than $O(gn)$).  We thus propose this alternate notion of \emph{standard} systems of loops, which can be computed as quickly as the canonical systems of loops for orientable surfaces.  In the next section, we show that for our purposes we can convert from one representation (in terms of parity of crossings) to the other.

A \emphdef{standard system of loops} of a non-orientable surface~$\surf$ of genus~$g$ is a system of loops such that the loops appear in the following order around the boundary of the corresponding polygonal schema (where bar denotes reversal and $p=\lfloor\frac{g-1}2\rfloor$): 
$zza_1b_1\bar a_1\bar b_1\ldots a_pb_p\bar a_p\bar b_p$ if $g$ is odd, and $yw\bar ywa_1b_1\bar a_1\bar b_1\ldots a_pb_p\bar a_p\bar b_p$ if $g$ is even.

\begin{restatable}{proposition}{Pstdsysloops}\label{P:std-sysloops}
  Let $(\surf,G^*)$ be a non-orientable cross-metric surface of genus~$g$ and size~$n$.  In $O(gn)$ time, we can compute a standard system of loops on~$(\surf,G^*)$ such that each loop has multiplicity at most~100.
\end{restatable}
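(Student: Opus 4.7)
The plan is to reduce to the closed orientable case of Lemma~\ref{L:lpvv} by first cutting $\surf$ along an orienting curve. First I would compute an orienting curve $\gamma$ on $(\surf,G^*)$ via Proposition~\ref{P:matousek} in $O(n)$ time, with multiplicity at most two. By Lemma~\ref{L:top-charact}, $\gamma$ is one-sided when $g$ is odd and two-sided when $g$ is even; cutting $\surf$ along $\gamma$ therefore yields an orientable cross-metric surface $\surf'$ of genus $p = \lfloor(g-1)/2\rfloor$ with either one boundary component (odd $g$) or two components $\partial_1,\partial_2$ (even $g$). In the even case I would additionally compute, by a linear-time search in the dual of $G^*$, a simple arc $w'$ in $\surf'$ from $\partial_1$ to $\partial_2$ (with multiplicity at most one on $G^*$) and cut along it, merging the two boundaries into one. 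Either way, I am left with an orientable cross-metric surface $\surf''$ of genus $p$ with a single boundary component.

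Next I would cap this boundary by a disk $D$ equipped with a constant-size graph structure, obtaining a closed orientable cross-metric surface $\hat\surf$ of genus $p$ and size $O(n)$. Applying Lemma~\ref{L:lpvv} to $\hat\surf$ with a basepoint on $\partial D$ produces, in $O(pn) = O(gn)$ time, a canonical system of loops $(a_1,b_1,\dots,a_p,b_p)$ on $\hat\surf$ with multiplicity at most four. Each such loop can be homotoped off the interior of $D$ by replacing every sub-arc that enters $D$ with an arc running along $\partial D$---a standard disk-pushing step that does not change the homotopy class. The resulting loops lie entirely in $\surf''$ and, after undoing the cuts along $w'$ and $\gamma$, project to loops on $\surf$ based at a common point. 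Combined with $\gamma$ (playing the role of $z$ for odd $g$ or $y$ for even $g$) and, in the even case, the loop obtained from $w'$ (playing the role of $w$), this produces a system of $g$ loops on $\surf$ whose polygonal schema is exactly the standard one, as one checks by tracking the identifications introduced at each cutting step.

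The hard part will be bounding the multiplicity of the final system by the stated constant 100 on the edges of the original graph $G^*$. Each loop of the LPVV canonical system has multiplicity at most four on $\hat\surf$, while $\gamma$ (multiplicity two) and $w'$ (multiplicity one) are cheap. The crucial observation is that the number of times a single canonical loop crosses $\partial D$ can be bounded by a constant after a disk-bigon-smoothing step, so that each rerouting traverses $\partial D$ (and hence $\gamma$ or $w'$ in $\surf$) only a bounded number of times and contributes only a constant additional multiplicity per edge of $G^*$. The bound 100 is loose enough to absorb these accumulated constants, and a careful but routine bookkeeping of the rerouting should yield the stated estimate.
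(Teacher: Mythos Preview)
Your high-level strategy---compute an orienting curve via Proposition~\ref{P:matousek}, cut along it (and along an extra arc in the even case) to get an orientable surface with one boundary component, and then invoke Lemma~\ref{L:lpvv}---is exactly what the paper does. The execution, however, diverges at two points, and both leave genuine gaps.

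First, the paper does \emph{not} cap the boundary with a disk and then push the LPVV loops off that disk. Instead it \emph{shrinks} the boundary to a single point before running Lemma~\ref{L:lpvv}, so the resulting canonical loops never meet the boundary at all; after expanding back, they still avoid it. Your disk-pushing step is where the multiplicity argument breaks down: the boundary of $\surf''$ carries $O(n)$ edge-endpoints of~$G^*$, so there is no way to attach a genuinely ``constant-size'' cross-metric structure to~$D$. If you do not include $\partial D$ in the graph, LPVV gives no control on the number of times a canonical loop crosses it; if you do include it, a loop of multiplicity four can still cross $\partial D$ $\Theta(n)$ times in total, and rerouting each interior arc along $\partial D$ then contributes unbounded multiplicity on~$G^*$. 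The ``disk-bigon-smoothing'' you allude to does not obviously reduce this to a constant. Shrinking the boundary sidesteps the whole issue.

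Second, the paper does not place the basepoint on the boundary. It places the LPVV basepoint~$v$ in the face incident with a chosen boundary point~$u$, and then builds a short connecting path~$p$ from~$u$ to~$v$ by running along at most two loops of the canonical system (hence multiplicity~$\le 8$ on~$\surf'$). The extra loop~$z$ (odd case), respectively the pair~$y,w$ (even case), is then the $p$-conjugate of~$c$, respectively of~$q$ and~$c$. Crucially, $p$ is made to arrive at~$v$ between two consecutive $a_i\bar b_i\bar a_ib_i$ blocks in the cyclic order around~$v$; this ``suitable corner'' condition is precisely what forces the resulting polygonal schema to be the standard one. Your ``as one checks by tracking the identifications'' hides exactly this step, and without it you cannot conclude that the schema is $zza_1b_1\bar a_1\bar b_1\ldots$ rather than, say, $a_1b_1zz\bar a_1\bar b_1\ldots$.
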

\begin{proof}
  We first compute an orienting simple closed curve~$c$ with multiplicity at most two (Proposition~\ref{P:matousek}).  Let $\surf'$ be the orientable surface with boundary obtained by cutting $\surf$ along~$c$.  We remark that each edge of~$G^*$ corresponds to at most three edges in~$\surf'$, and these edges induce naturally a cross-metric structure on~$\surf'$.

  Let us first assume that $g$ is odd.  Then $c$ is one-sided, and $\surf'$ has a single boundary component~$B$.  Let $u$ be an arbitrary point of~$B$.  Let $\bar\surf'$ be the surface obtained by attaching a disk to the boundary component of~$\surf'$.  In a first step, in $O(gn)$ time, we compute a canonical system of loops~$L$ of~$\bar\surf'$, with basepoint~$v$ in the face incident with~$u$, that has multiplicity at most four and does not cross any edge of~$B$.  We do this as follows.  First, starting from~$\surf'$, we shrink the boundary component~$B$ to a point.  Then, we apply the algorithm by Lazarus, Pocchiola, Vegter, and Verroust~\cite{lpvv-ccpso-01} (Lemma~\ref{L:lpvv}), obtaining a canonical system of loops of the resulting surface based at~$v$.  Finally, we expand back the point to the boundary component~$B$.
  \begin{figure}\def\svgwidth{0.8\linewidth}
    \centering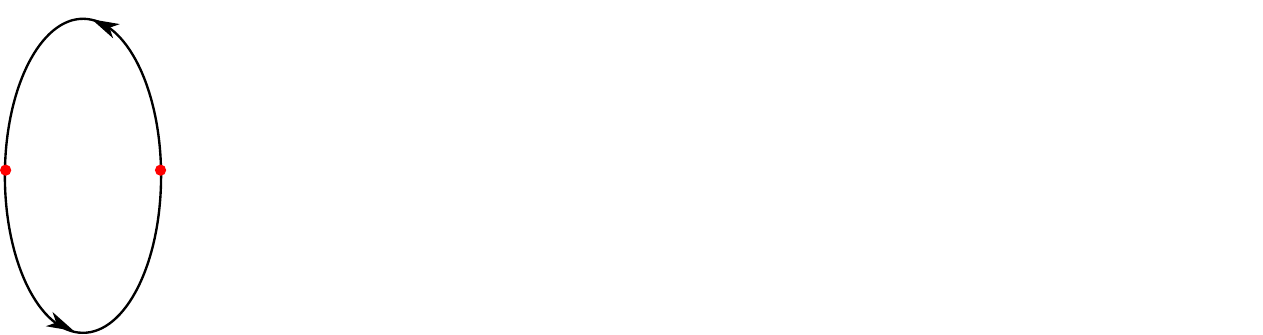
    \caption{The proof of Proposition~\ref{P:std-sysloops}, in the case where $g$ is odd. Left: A pictorial view of the orientable surface~$\surf'$.  The one-sided curve~$c$ is repeated twice on the boundary~$B$ of~$\surf'$.  Right: On~$\surf$, the construction of the loop~$z$, which is one-sided because $c$ is one-sided.}
    \label{F:std1}
  \end{figure}

  Let us now connect $u$ to the basepoint~$v$ with a path~$p$ that arrives at~$v$ at a suitable corner around~$v$ (Figure~\ref{F:std1}).  By this we mean the following:  The cyclic ordering of the edges at~$v$ is $a_1\bar b_1\bar a_1b_1\ldots a_g\bar b_g\bar a_gb_g$ (where bar indicates the origin of an edge, and no bar indicates the target of an edge), see Figure~\ref{F:canon-orient}; we make $p$ arrive between two consecutive groups of the form $a_i\bar b_i\bar a_ib_i$.  For this purpose, let $f$ be the face containing the basepoint~$v$ of~$L$; it is incident with~$u$.  The loops cut~$f$ into \emph{subfaces}; let $f'$ be the subface incident with~$u$.  Starting from~$u$ in~$f'$, we draw a path~$p$ that goes to a portion of~$L$ that lies on the boundary of~$f'$, and then runs along~$L$ (possibly exiting~$f$) until we get to the basepoint~$v$ at a suitable corner.  Because, when running along the boundary of the polygonal schema, the suitable corners appear regularly, every four corners (see Figure~\ref{F:canon-orient}), this can be done by running along at most two loops; each of these loops has multiplicity at most four on~$\surf'$, and thus $p$ has multiplicity at most eight on~$\surf'$.  This path~$p$ is computed in $O(n)$ time.

  Finally, the set of loops~$L$, together with the loop~$z$ that is the concatenation of the reversal of~$p$, $c$, and (a slightly translated copy of)~$p$, is a standard system of loops of~$\surf$, which is computed in $O(gn)$ time.  Moreover, every edge of~$G^*$ is crossed at most twice by~$c$ and at most 24 times by~$p$, so the multiplicity of each loop is at most~50 with respect to~$G^*$.

  \begin{figure}\def\svgwidth{0.8\linewidth}
    \centering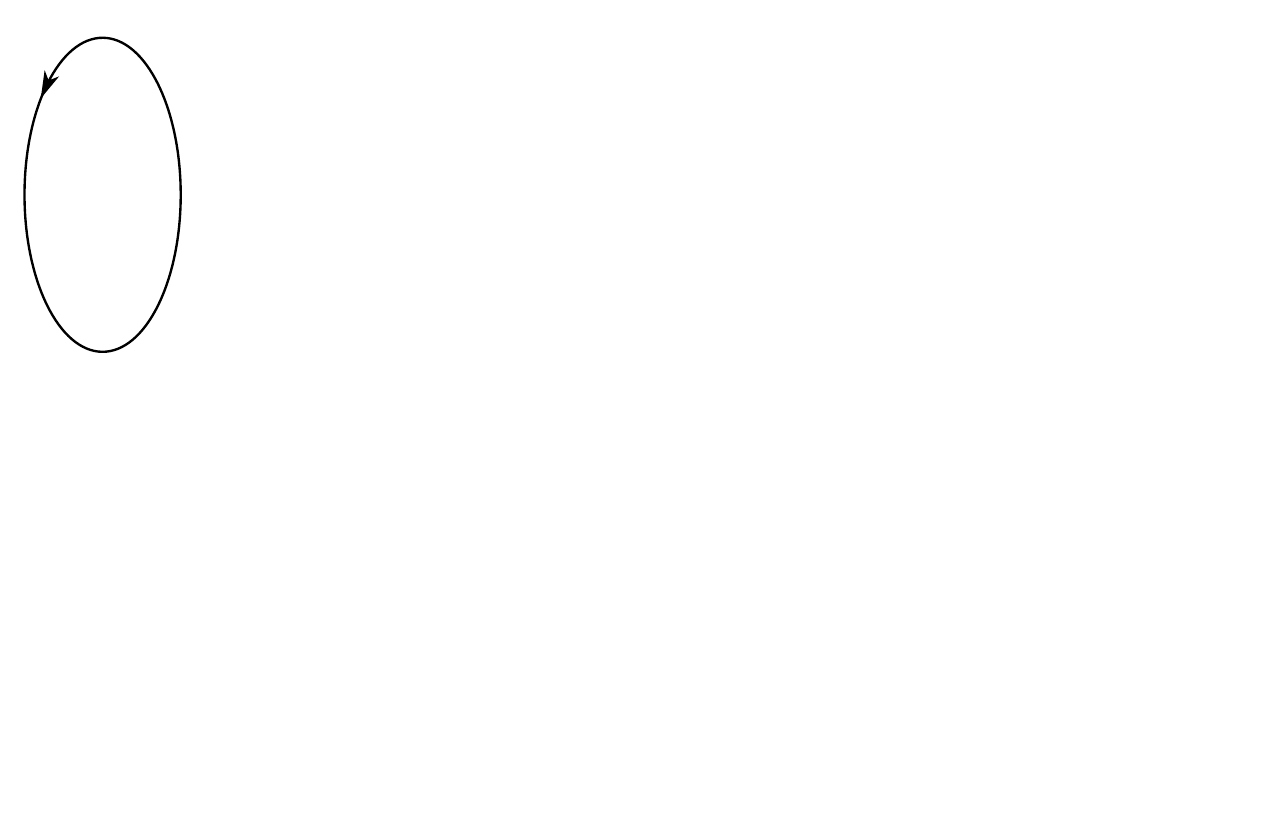
    \caption{The proof of Proposition~\ref{P:std-sysloops}, in the case where $g$ is even. Left: A pictorial view of the orientable surface~$\surf'$.  The two-sided curve~$c$ corresponds to two boundary components of~$\surf'$.  Right: On~$\surf$, the construction of the loops~$w$ and~$y$. }
    \label{F:std2}
  \end{figure}

  There remains to consider the case where $g$ is even; see Figure~\ref{F:std2}.  In that case, $\surf'$ has exactly two boundary components.  We choose an arbitrary basepoint~$u$ on one of the boundary components.  Let $u'$ be the unique point on the other boundary component such that $u$ and~$u'$ are identified on~$\surf$.  We first compute a path~$q$ of multiplicity one on~$\surf'$ between $u$ and~$u'$.  Let $\surf''$ be the surface obtained from~$\surf'$ by cutting along~$q$; it has a single boundary component.  Let $u''$ be a point of~$\surf''$ mapping to~$u$ or~$u'$ after regluing to obtain~$\surf'$.  Then we proceed as in the case where $g$ is odd, using $u''$ in place of~$u$.  The final standard system of loops is made (after a small perturbation) of the canonical system of loops of~$\surf''$, the loop~$y$ that is the concatenation of the reversal of~$p$, $q$, and $p$ (which is one-sided because the closed curve corresponding to~$q$ is one-sided; indeed, because $c$ is two-sided and cuts $\surf$ into an orientable surface, any closed curve crossing~$c$ once must be one-sided, since otherwise $\surf$ would be orientable), and the loop~$w$ that is the concatenation of the reversal of~$p$, $c$, and~$p$ (which is two-sided because $c$ is two-sided).  The multiplicity of each loop is at most~100 (because now we have to take into account the path~$q$, which cuts some edges of~$\surf'$ into two subedges).
\end{proof}

\section{Converting Between Canonical and Standard Signatures}\label{S:conversion}

Our next lemma describes a matrix allowing to change the coordinates in homology, from the standard basis to some canonical one.  The motivation is that the topological properties we need are best expressed in terms of crossings with a canonical system of loops, while we can more easily compute a standard system of loops.

\begin{restatable}{lemma}{Lchangebasis}\label{L:change-basis}
  Assume $L'$ is a \emph{standard} system of loops on a non-orientable surface.  There exists a \emph{canonical} system of loops~$L$ such that, for each path~$p$,  we have $\sigma(p,L)=\varphi(\sigma(p,L'))$, where $\varphi:\Z_2^g\to\Z_2^g$ is the invertible linear map described by the following matrices depending on the parity of $g$.

{\footnotesize
\[
A_{\text{odd}} = \begin{bmatrix}
    1 & 1 & 0 & 0 & 0 & 0 & \cdots & 0 & 0 & 0 \\
    1 & 1 & 1 & 0 & 0 & 0 & \cdots & 0 & 0 & 0 \\
    1 & 0 & 1 & 1 & 0 & 0 & \cdots & 0 & 0 & 0 \\
    1 & 0 & 1 & 1 & 1 & 0 & \cdots & 0 & 0 & 0 \\
    1 & 0 & 1 & 0 & 1 & 1 & \cdots & 0 & 0 & 0 \\
    1 & 0 & 1 & 0 & 1 & 1 & \cdots & 0 & 0 & 0 \\
    1 & 0 & 1 & 0 & 1 & 0 & \cdots & 0 & 0 & 0 \\
    \vdots & \vdots & \vdots & \vdots & \vdots & \vdots & \ddots & \vdots & \vdots & \vdots \\
    1 & 0 & 1 & 0 & 1 & 0 & \cdots & 1 & 1 & 1 \\
    1 & 0 & 1 & 0 & 1 & 0 & \cdots & 1 & 0 & 1 
\end{bmatrix}
\qquad
A_{\text{even}} = \begin{bmatrix}
    1 & 1 & 0 & 0 & 0 & 0  & \cdots & 0 & 0 & 0\\
    0 & 1 & 1 & 0 & 0 & 0 & \cdots &0 & 0 & 0 \\
    0 & 1 & 1 & 1 & 0 & 0  & \cdots & 0 & 0 & 0\\
    0 & 1 & 0 & 1 & 1 & 0  & \cdots & 0 & 0 & 0\\
    0 & 1 & 0 & 1 & 1 & 1  & \cdots & 0 & 0 &0\\
    0 & 1 & 0 & 1 & 0 & 1  & \cdots & 0 & 0 &0\\
    0 & 1 & 0 & 1 & 0 & 1  & \cdots & 0 & 0 &0\\
    \vdots & \vdots & \vdots & \vdots & \vdots & \vdots& \ddots & \vdots  & \vdots & \vdots \\
    0 & 1 & 0 & 1 & 0 & 1  & \cdots & 1&1 & 1 \\
    0 & 1 & 0 & 1 & 0 & 1 & \cdots & 1 & 0 &1\\
\end{bmatrix}
\]
}
\end{restatable}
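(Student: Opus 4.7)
The plan is to reduce the claim to a computation in $H_1(\surf;\Z_2)$. By $\Z_2$-Poincar\'e duality (valid on non-orientable surfaces as well), the signature coordinate $\sigma(p,L)_i$ equals the mod-2 intersection number $[p]\cdot[\ell_i]$. By bilinearity of this pairing, the identity $\sigma(p,L)=\varphi(\sigma(p,L'))$ for every path $p$ is equivalent to the homological identity $[\ell_i]=\sum_j \varphi_{ij}\,[\ell'_j]$ in $H_1(\surf;\Z_2)$ for every $i$. It therefore suffices to construct a canonical system of loops $L$ whose homology classes, expressed in the basis $\{[\ell'_j]\}$, are the rows of the given matrix.

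Denote by $v_i\in H_1(\surf;\Z_2)$ the linear combination specified by row $i$ of $A_{\text{odd}}$ (respectively $A_{\text{even}}$). A direct calculation using the known mod-2 intersection products on a standard polygonal schema---$z\cdot z=1$, $a_i\cdot b_i=1$, and all others vanish for odd $g$; analogously with $w\cdot w=1$ and $w\cdot y=1$ for even $g$---shows that $\{v_i\}$ is an orthonormal basis for the intersection pairing, i.e.\ $v_i\cdot v_j=\delta_{ij}$. The telescoping pattern of $1$'s in the matrices is engineered precisely for this to hold; invertibility of $\varphi$ is an immediate consequence, as an orthonormal set is linearly independent.

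To realize $(v_1,\ldots,v_g)$ as the homology classes of an actual canonical system of loops, I would invoke the change-of-coordinates principle for non-orientable surfaces: the mapping class group of $\surf$ surjects onto the group of linear automorphisms of $H_1(\surf;\Z_2)$ preserving the intersection pairing (classical, going back to work of Dehn, Nielsen, and Lickorish). Since the homology classes of any fixed canonical system also form an orthonormal basis with respect to this pairing, a suitable self-homeomorphism of $\surf$ carries that system to one whose classes are exactly $(v_1,\ldots,v_g)$. Alternatively, the loops $\ell_i$ can be drawn explicitly on the polygonal schema of $L'$, guided by the topological identity $T^2\#N_1\cong 3N_1$ that trades each handle of the standard schema for two extra crosscaps, with the resulting combinatorics matching the telescoping pattern of the matrices. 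The main obstacle is this last realizability step; everything else is routine mod-2 linear algebra on the matrices in the statement.
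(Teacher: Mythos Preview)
Your approach is quite different from the paper's.  The paper argues constructively: it carries out an explicit sequence of cut-and-paste moves on the polygonal schema of~$L'$ (four moves per handle, plus one preliminary move when $g$ is even), transforming it step by step into the canonical schema $c_1c_1\cdots c_gc_g$.  Each newly introduced diagonal is a concatenation of sides of the current polygon, and the matrix entries are read off directly from these concatenations.  Nothing about the mapping class group is invoked; the proof is self-contained, in the spirit of Brahana's proof of the classification of surfaces.  In fact your ``alternatively, draw the loops explicitly via $T^2\#N_1\cong 3N_1$'' is precisely the paper's method.

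Your homological route is more conceptual and does explain \emph{why} a matrix of this shape should exist: the rows must be orthonormal for the mod-$2$ intersection form, and the telescoping pattern is one way to achieve that starting from $(1)\oplus H^{\oplus p}$ (odd~$g$) or $\bigl(\begin{smallmatrix}0&1\\1&1\end{smallmatrix}\bigr)\oplus H^{\oplus(p-1)}$ (even~$g$).  But two gaps remain, only the first of which you flag.  First, the surjectivity you need is not the orientable statement: for non-orientable~$\surf$ the target is the orthogonal group of the diagonal form on~$\Z_2^g$, and while the surjection does hold, the names you cite (Dehn, Nielsen, Lickorish) concern generation of the mapping class group, not its $\Z_2$-homology image; a precise reference or a short direct argument is required here.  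Second, and more subtly, your reduction ``$\sigma(p,L)_i=[p]\cdot[\ell_i]$'' is only valid when $p$ is \emph{closed}: for an open path the mod-$2$ crossing number with~$\ell_i$ depends on the actual curve~$\ell_i$, not merely on~$[\ell_i]$, so a canonical system produced abstractly by pushing some fixed canonical system forward under a homeomorphism need not satisfy the identity edge-by-edge as the lemma is stated (and as it is applied, to edges of~$G$, in the proof of Theorem~\ref{T:common-tool}).  The paper's construction sidesteps this because its new loops are literal concatenations of old ones, so crossing numbers with \emph{any} path add up by definition; your abstract realisation does not give you that.
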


We remark that we actually never compute the canonical system of loops~$L$.

\begin{proof}
The proof uses the cut-and-paste technique used in the proof of the classification of surfaces. We start with the polygonal schema associated to $L'$ and at each step we cut along a diagonal and paste along one of the sides of the polygon. Each diagonal can be seen as a concatenation of the edges in the polygonal schema and becomes a side in the new polygonal schema that we obtain after pasting.

This will prove that $\varphi^{-1}$ is given by the following matrices, each being the inverse of the one stated in the lemma, as a simple computation shows.

{\footnotesize
\[
A_{\text{odd}}^{-1} = \begin{bmatrix} 
    1 & 1 & 1 & 1 & 1 & 1 & \cdots & 1 & 1 & 1 \\
    0 & 1 & 1 & 1 & 1 & 1 & \cdots & 1 & 1 & 1 \\
    1 & 1 & 0 & 0 & 0 & 0 & \cdots & 0 & 0 & 0 \\
    0 & 0 & 0 & 1 & 1 & 1 & \cdots & 1 & 1 & 1 \\
    0 & 0 & 1 & 1 & 0 & 0 & \cdots & 0 & 0 & 0 \\
    0 & 0 & 0 & 0 & 0 & 1 & \cdots & 1 & 1 & 1 \\
    \vdots & \vdots & \vdots & \vdots & \vdots & \vdots & \ddots & \vdots & \vdots & \vdots \\
    0 & 0 & 0 & 0 & 0 & 0 & \cdots & 0 & 0 & 0 \\
    0 & 0 & 0 & 0 & 0 & 0 & \cdots & 0 & 1 & 1 \\
    0 & 0 & 0 & 0 & 0 & 0 & \cdots & 1 & 1 & 0 \\
\end{bmatrix}
\qquad
A_{\text{even}}^{-1} = \begin{bmatrix} 
    1 & 1 & 1 & 1 & 1 & 1 & \cdots & 1 & 1 & 1 \\
    0 & 1 & 1 & 1 & 1 & 1 & \cdots & 1 & 1 & 1 \\
    0 & 0 & 1 & 1 & 1 & 1 & \cdots & 1 & 1 & 1 \\
    0 & 1 & 1 & 0 & 0 & 0 & \cdots & 0 & 0 & 0 \\
    0 & 0 & 0 & 0 & 1 & 1 & \cdots & 1 & 1 & 1 \\
    0 & 0 & 0 & 1 & 1 & 0 & \cdots & 0 & 0 & 0 \\
    \vdots & \vdots & \vdots & \vdots & \vdots & \vdots & \ddots & \vdots & \vdots & \vdots \\
    0 & 0 & 0 & 0 & 0 & 0 & \cdots & 0 & 0 & 0 \\
    0 & 0 & 0 & 0 & 0 & 0 & \cdots & 0 & 1 & 1 \\
    0 & 0 & 0 & 0 & 0 & 0 & \cdots & 1 & 1 & 0 \\
\end{bmatrix}
\]

}

\begin{figure}
  \def\svgwidth{\linewidth}\scriptsize
    \centering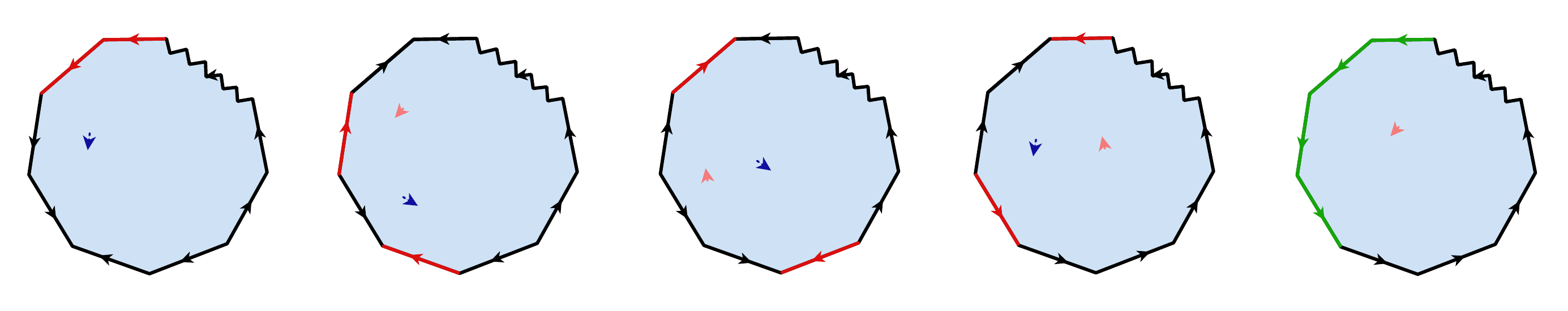
  \caption{The cut-and-paste process to go from $L'$ to $L_1$ in the case where $g$ is odd. At each polygon, the blue diagonal introduces the cut and the red edge denotes the edge along which we paste. Here, $X$ denotes the loops $\bar a_2\bar b_2a_3b_3\bar a_3 \bar b_3\ldots a_pb_p\bar a_p \bar b_p$.}
\centering
\label{cut and pasting}
\end{figure}

First let us consider the case where $g=2p+1$ is odd. We first show that using $4$ cut-and-paste moves we can transform the polygonal schema $L': zza_1b_1\bar a_1\bar b_1\ldots a_pb_p\bar a_p\bar b_p$ to $L_1: c_1c_1c_2c_2z_1z_1a_2b_2\bar a_2\bar b_2\ldots a_p b_p \bar a_p \bar b_p$, see Figure~\ref{cut and pasting}. We can see that $c_1=e+a_1=z+b_1$ in which $u+v$ denotes the concatenation of two loops $u$ and $v$, ignoring ordering and orientation (which is irrelevant as far as crossing numbers are concerned); in other words, for any path~$p$, we have $\sigma(p,c_1)=\sigma(p,z)+\sigma(p,b_1)$.  Similarly, $c_2=e=z+a_1+b_1$ and $z_1=e+b_1=z+a_1$.  We can see that these cut-and-paste scenarios can be described by the following matrix: let $M_1$ be a $g\times g$ matrix with columns $m_i$ for $1\leq i \leq g$ such that $m_1=(1,0,1,0,\ldots,0)$ (corresponding to~$c_1$), $m_2=(1,1,1,0,\ldots, 0)$ (corresponding to~$c_2$), and $m_3=(1,1,0,0,\ldots,0)$ (corresponding to~$z_1$).  For $4\leq i\leq g$, let $m_i$ be the $g\times 1$ matrix with every element $0$ except its $i$th element that is $1$. Note that the entries of this matrix come from $\Z_2$, the rows correspond to the elements in $L'$ and the columns correspond to the elements in $L_1$.

\begin{figure}
  \def\svgwidth{\linewidth}\scriptsize
    \centering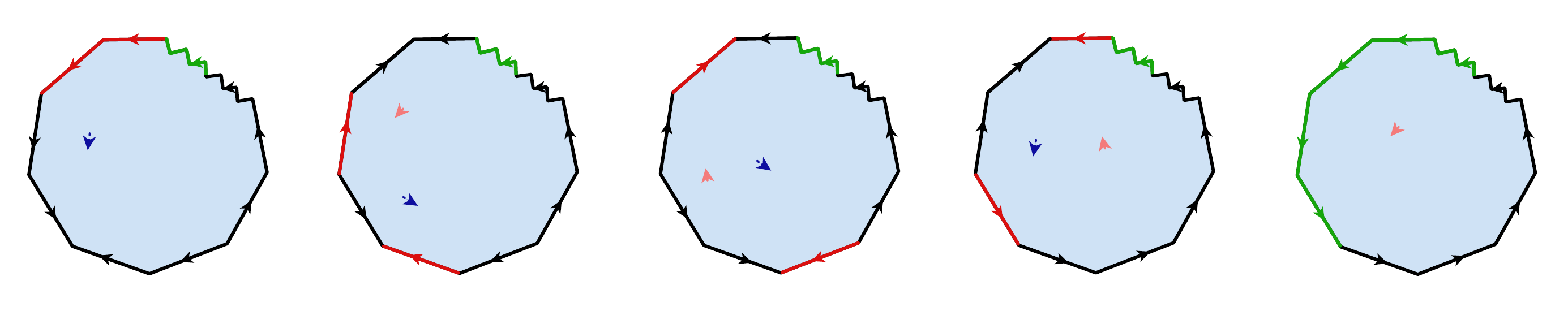
  \caption{The cut-and-paste process to go from $L_{i-1}$ to $L_i$ in the case where $g$ is odd. Here, $X:\bar a_{i+1}\bar b_{i+1}\ldots a_pb_p\bar a_p \bar b_p$ and $X': c_1c_1\ldots c_{2i-2}c_{2i-2}$.}
\centering
\label{cut and pasting2}
\end{figure}
Let $L_{i-1}$ denote the polygonal schema $c_1c_1\ldots c_{2i-2}c_{2i-2}z_{i-1}z_{i-1}a_ib_i\bar a_i \bar b_i\ldots a_pb_p$. Mimicking the same cut-and-paste moves depicted in Figure~\ref{cut and pasting} for the sides $z_{i-1},a_i$ and $b_i$ for $i>1$ transforms $L_{i-1}$ to the polygonal schema $L_i: c_1c_1\ldots c_{2i}c_{2i}z_iz_ia_{i+1}b_{i+1}\bar a_{i+1}\bar b_{i+1}\ldots a_p b_p \bar a_p \bar b_p$, see Figure~\ref{cut and pasting2}.  We denote the corresponding matrix by $M_{i}$; this matrix changes the basis from $L_{i}$ to $L_{i-1}$, keeping all the elements unchanged except $z_{i-1}$, $a_i$, and $b_i$: $c_{2i-1}=z_{i-1}+b_i$, $c_{2i}=z_{i-1}+a_i+b_i$, and $z_i=z_{i-1}+a_i$.  Continuing this process, we obtain the canonical polygonal schema $L$ with the final matrix being given by $A^{-1}_{\text{odd}}=M_1M_2\cdots M_p$.

\begin{figure}
   \def\svgwidth{.5\linewidth}\scriptsize
    \centering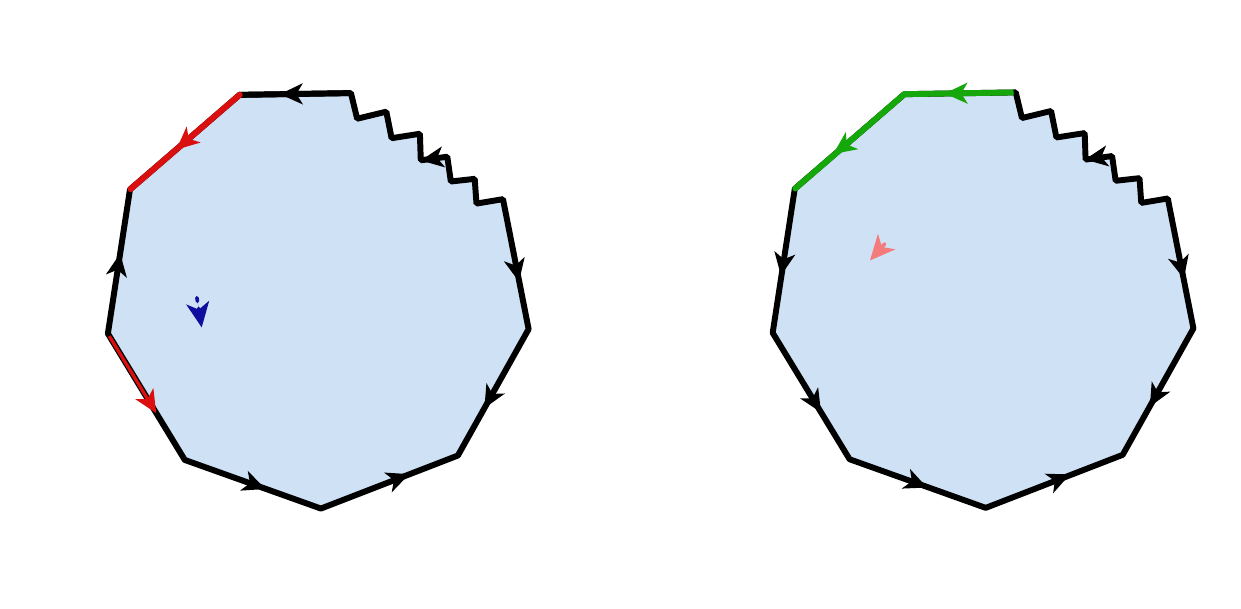
  \caption{The additional cut-and-paste move in the case where $g$ is even.}
\centering
\label{cut and pasting 3}
\end{figure}

In the case where $g=2p$ is even, we need one additional type of cut to transform $L': yw\bar ywa_1b_1\bar a_1\bar b_1\ldots a_{p-1}b_{p-1}\bar a_{p-1}\bar b_{p-1}$ to $L_1: yyzza_1b_1\bar a_1\bar b_1\ldots a_{p-1}b_{p-1}\bar a_{p-1}\bar b_{p-1}$, see Figure~\ref{cut and pasting 3}. Then we can use the four types of cuts introduced for the case where the genus is odd to turn $L_1$ to $L$.

One can check that in this case the corresponding matrix is the matrix
$A^{-1}_{\text{even}}$.
\end{proof}

\section{Subhomology Covers}\label{S:subhomology-covers}

In this section, we introduce subhomology covers, which are covering spaces related to quotients of the homology group.  They generalize \emph{cyclic double covers} introduced by Erickson~\cite{e-sntcd-11} and used by Borradaile, Chambers, Fox, and Nayyeri~\cite{bcfn-mchbs-17} and are inspired from \emph{homology covers} by Chambers, Erickson, Fox, and Nayyeri~\cite{cefn-mcsg-23}, but there are important differences with the latter: They can capture not necessarily the homology group, but arbitrary quotients of the homology group (which allows for faster algorithms), and they are defined on surfaces without boundary. This tool is not restricted to non-orientable surfaces and we present it for arbitrary surfaces.  Our construction is inspired from the \emph{voltage construction} by Gross and Tucker~\cite[Chapter~4]{gross-tucker87-graph}.

Throughout this section, let $(\surf,G)$ be a combinatorial surface (orientable or not) of Euler genus~$g$ and size~$n$, $L$ a system of loops in general position with respect to $G$ (we will only need the case where $L$ is the standard system of loops, computed in Proposition~\ref{P:std-sysloops}, but we do not assume this for the construction).  Furthermore, let $k$ be an integer and $\rho:\Z_2^g\to\Z_2^k$ a linear map.  We define $\alpha:E\to\Z_2^k$ by $\alpha(e)=\rho(\sigma(e,L))$.
\begin{lemma}
  The map~$\alpha$ satisfies the \emph{Kirchhoff voltage law} on every face: For every face of $G$ with boundary edges $e_1,\dots,e_m$, we have that $\sum_{i=1}^m\alpha(e_i)=0$. 
\end{lemma}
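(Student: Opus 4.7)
The plan is to use linearity of $\rho$ and additivity of signatures to reduce the claim to the fact that the boundary walk of a face, viewed as a closed curve on $\surf$, has zero signature with respect to $L$.

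First I would observe that the signature map $\sigma(\cdot,L)$ is additive under concatenation of paths, directly from its definition as the vector of parities of crossings with the loops $\ell_1,\dots,\ell_g$. Combined with the linearity of $\rho$, this yields
\[
\sum_{i=1}^m \alpha(e_i) \;=\; \rho\!\left( \sum_{i=1}^m \sigma(e_i,L) \right) \;=\; \rho\bigl(\sigma(c_f,L)\bigr),
\]
where $c_f = e_1 e_2 \cdots e_m$ is the boundary walk of $f$, interpreted as a closed curve on $\surf$.

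It then suffices to prove $\sigma(c_f,L)=0$ in $\Z_2^g$. I would do this coordinatewise: fix $j$, and consider the intersection $\ell_j \cap \bar f$, where $\bar f$ denotes the closed disk associated to the face $f$ of the cellular embedding $G \hookrightarrow \surf$. Because $L$ is in general position with $G$, the loop $\ell_j$ crosses $\partial f$ transversally and avoids the vertices of $G$, so $\ell_j \cap \bar f$ is a compact $1$-submanifold of $\bar f$ with boundary lying on $\partial f$. Its connected components are therefore either simple arcs (contributing two boundary points each) or simple closed curves (contributing none), so the total number of boundary points is even. This count is exactly the number of crossings of $\ell_j$ with $c_f$, hence $\sigma(c_f,L)_j = 0$.

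I do not anticipate any real obstacle: the proof is essentially bookkeeping, plus the elementary fact that a properly embedded $1$-manifold in a closed disk has an even number of boundary points. The only place one has to be careful is that the boundary walk $c_f$ may traverse the same edge of $G$ more than once (e.g., along a pendant edge), but since signatures live in $\Z_2^g$ this is automatically accounted for.
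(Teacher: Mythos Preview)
Your proof is correct and follows essentially the same approach as the paper. The paper's proof is a two-sentence version of yours: it observes that the boundary of a face bounds a disk and is therefore separating, hence every loop of~$L$ crosses it an even number of times; your $1$-manifold-in-a-disk argument is exactly a direct justification of this last step.
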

\begin{proof}
  The boundary of a face bounds a disk and consequently is separating.  Thus every loop in~$L$ intersects it an even number of times.
\end{proof}

We define the graph $\tilde G$ as a graph with vertices $(v,\nu)$, one for every vertex $v$ in $G$ and $\nu$ in $\Z^k_2$. For each edge $e$ in $G$, connecting the vertices $u$ and $v$, and every $\nu$ in $\Z_2^k$, there is an edge in $\tilde G$ connecting $(u,\nu)$ and $(v,\nu + \alpha(e))$. We now observe that each facial cycle in~$G$ (which bounds a face of~$G$ in~$\surf$) lifts to a cycle in~$\tilde G$, by the Kirchhoff voltage law.  By attaching a disk to the lift of each such cycle, we obtain a combinatorial surface $(\tilde\surf,\tilde G)$, which is naturally a (possibly non-connected) covering space of~$\surf$ and is called the \emphdef{subhomology cover associated to $\alpha$}.  An alternate way to build $(\tilde\surf,\tilde G)$ is to cut~$\surf$ along~$L$, obtaining a disk~$D$, and to glue together $2^k$ copies of these disks, in such a way that the copy~$\nu$ gets attached via a lift of loop~$\ell_i\in L$ to the copy~$\nu+\rho(s_i)$, where $s_i\in\Z_2^g$ has a single non-zero entry, the $i$th one (Figure~\ref{fig:subhomology cover}); however, we will not need this equivalence.

\begin{figure}
  \def\svgwidth{.9\linewidth}\scriptsize
    \centering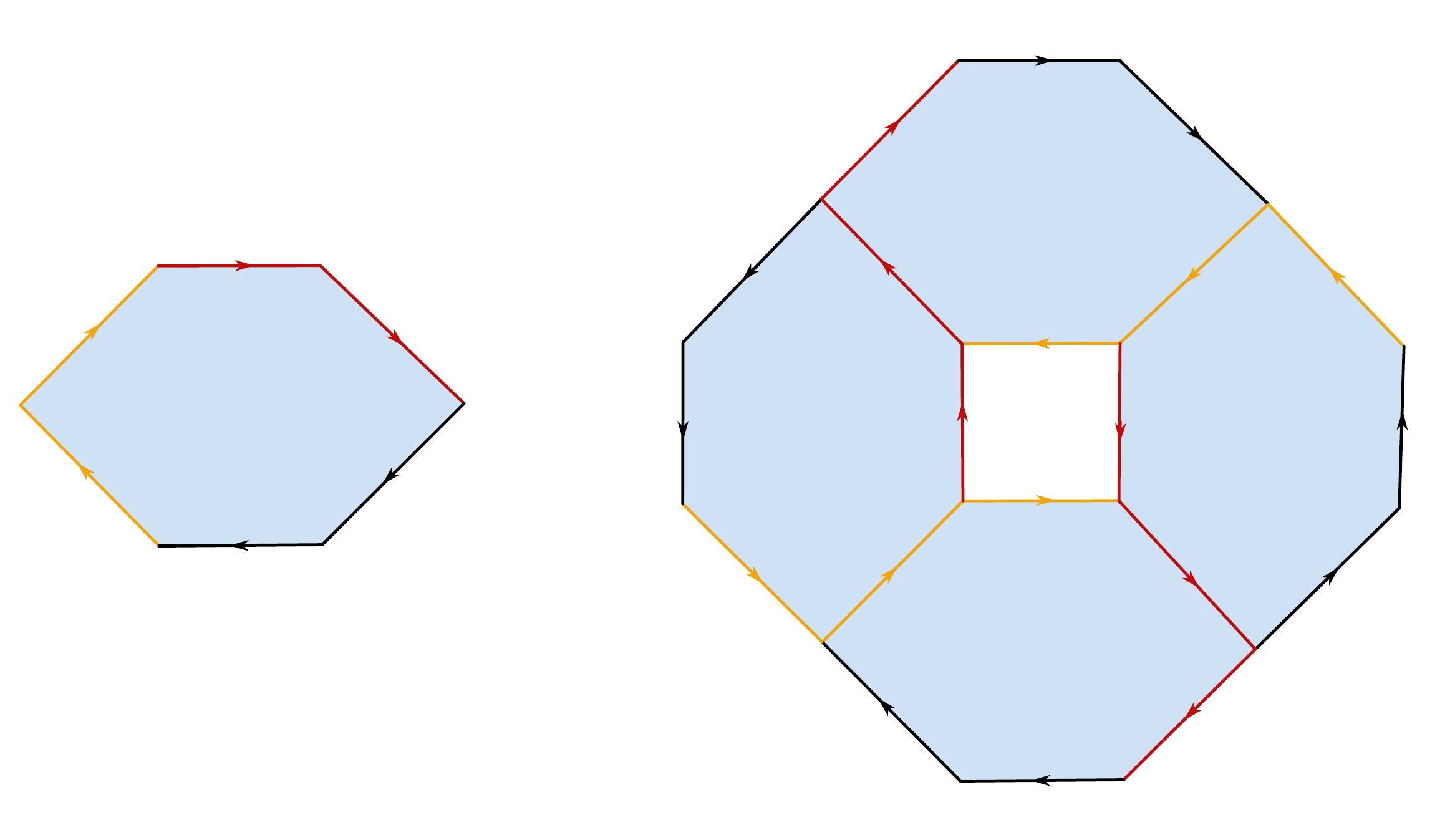
  \caption{Left: A non-orientable surface of genus $3$ cut along a canonical system of loops. Right: A non-orientable 4-sheeted covering space of the surface at left with genus $6$ given by $\rho:\Z_3^k\rightarrow \Z_2^k$; $\rho(s_1)=(1,0)$, $\rho(s_2)=(0,0)$ and $\rho(s_3)=(0,1)$.  The notation $l_i^j$ indicates the $j$th lift of the loop~$l_i$.}
\centering
\label{fig:subhomology cover}
\end{figure}
\begin{lemma}\label{L:rhocovers-algo}
  Assume that one is given the combinatorial surface~$(\surf,G)$ together with the map~$\alpha$.  Then, in $O(2^kkn)$ time, we can compute the combinatorial surface $(\tilde\surf,\tilde G)$ that is the subhomology cover of~$(\surf,G)$ associated to $\alpha$.  Moreover, $\chi(\tilde \surf) = 2^k\chi(\surf)$.
\end{lemma}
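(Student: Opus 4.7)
The plan is to give a direct constructive proof following the definition of $(\tilde\surf,\tilde G)$ essentially verbatim: build the vertices, then the edges, then the faces, and verify the running time and Euler characteristic bookkeeping.

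First I would enumerate the vertex set of $\tilde G$: for each vertex $v$ of~$G$ and each $\nu\in\Z_2^k$, allocate a vertex $(v,\nu)$. This costs $O(2^k|V(G)|)$ time (the $k$ factor is absorbed by the cost of storing each label $\nu$). Next, for each edge $e=uv$ of~$G$ and each $\nu\in\Z_2^k$, I would compute $\nu+\alpha(e)$ in $\Z_2^k$ (which costs $O(k)$ time per edge lift) and insert an edge between $(u,\nu)$ and $(v,\nu+\alpha(e))$; this contributes $O(2^k k |E(G)|)$ overall. Using rotation systems (or graph-encoded maps) one copies, for each $\nu$, the local cyclic ordering of edges around~$v$ over to $(v,\nu)$, which takes the same time as listing all edge endpoints.

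For the faces, I would rely on the lemma established just before the statement: since $\alpha$ satisfies the Kirchhoff voltage law, starting at $(v_0,\nu)$ and following the lifts of the boundary edges $e_1,\ldots,e_m$ of a face $f$ returns to $(v_0,\nu)$, because the total shift is $\sum_i\alpha(e_i)=0$. Thus each face $f$ lifts to exactly $2^k$ cycles in~$\tilde G$, one for each $\nu\in\Z_2^k$, and I can attach a disk to each to complete the cellular embedding on~$\tilde\surf$. Walking around every lifted face boundary costs $O(2^k k \cdot \sum_f |f|)=O(2^k k n)$ in total (the $k$ factor again accounts for manipulating the labels~$\nu$). Summing the three stages gives the claimed $O(2^k k n)$ running time.

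Finally, the Euler characteristic follows immediately from the construction: $\tilde G$ has $2^k|V(G)|$ vertices, $2^k|E(G)|$ edges, and $2^k|F(G)|$ faces, hence $\chi(\tilde\surf)=2^k\chi(\surf)$. The main (minor) obstacle is simply being careful with the representation — making sure that the output is a combinatorial surface in the same encoding as the input (rotation system/graph-encoded map) so that downstream algorithms can use it, and ensuring that the $O(k)$ cost of each $\Z_2^k$ operation is accounted for in the final bound; nothing deeper than straightforward bookkeeping is required.
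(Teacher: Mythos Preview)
Your proposal is correct and essentially identical to the paper's own proof: both build the lifted vertices, edges, and faces directly from the definition (using the Kirchhoff voltage law to justify that each facial cycle lifts to $2^k$ closed cycles), mention the graph-encoded map / rotation-system representation for the combinatorial surface, and derive the Euler characteristic from the $2^k$-fold count of vertices, edges, and faces. The paper adds a brief explicit description in terms of flags and their three involutions, but this is just a concrete instantiation of the bookkeeping you already outline.
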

Again, the system of loops~$L$ is not part of the input; only the map $\alpha$ is specified.
\begin{proof}
  The vertices of $\tilde G$ are given by the lifts of the vertices of $G$, that is, by pairs $(v,\nu)$ for every vertex~$v$ of~$G$ and every $\nu \in \Z_2^k$. Every edge $e$ from $v$ to $w$ in $G$ lifts, given $\nu\in \Z_2^k$, to an edge from $(v,\nu)$ to $(w,\nu + \alpha(e))$. Finally, each facial cycle in~$G$, incident to vertex~$v$ of~$G$, lifts to $2^k$ faces of~$\tilde G$ in~$\tilde\surf$, each incident to $(v,\nu)$ for $\nu\in\Z_2^k$, as explained above.
 
  The combinatorial map of $\tilde G$ can thus be computed in $O(2^kkn)$ time.  A more explicit construction would depend on the data structure used.  For example, in the graph-encoded map data structure~\cite{l-gem-82,e-dgteg-03}, each flag~$f$ of~$G$ incident to vertex~$v$ corresponds to $2^k$ flags of~$\tilde G$, denoted $\Setbar{(f,\nu)}{\nu\in\Z_2^k}$ where $(f,\nu)$ has associated vertex~$(v,\nu)$, and we can easily connect the flags via their three involutions, in overall $O(2^kkn)$ time.

  The last claim follows from the fact that every vertex, edge, and face of~$G$ on~$\surf$ corresponds to $2^k$ vertices, edges, and faces of~$\tilde G$ on~$\tilde\surf$, respectively.
\end{proof}

\begin{lemma}\label{l:endpoint}
  Let $c$ be a closed walk in~$G$, and $\tilde c$ be a lift of~$c$ in~$\tilde G$, with endpoints $(v,\nu)$ and~$(v,\nu')$.  Then $\rho(\sigma(c,L)) = \nu' - \nu$.
\end{lemma}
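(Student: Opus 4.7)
My plan is to trace the lift $\tilde c$ in $\tilde G$ one edge at a time and read off the displacement in the second coordinate directly from the voltage-graph definition of $\tilde G$. Writing $c$ as the concatenation of directed edges $e_1 e_2 \cdots e_m$ of $G$ and letting $v_0, v_1, \ldots, v_m$ be the sequence of vertices it visits (so $v_0 = v_m = v$), I will inductively produce the corresponding sequence $(v_0, \nu_0), (v_1, \nu_1), \ldots, (v_m, \nu_m)$ of lifts along $\tilde c$, starting from $\nu_0 = \nu$.

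From the very definition of $\tilde G$, the unique lift of $e_i$ starting at $(v_{i-1}, \nu_{i-1})$ ends at $(v_i, \nu_{i-1} + \alpha(e_i))$, so a one-line induction gives $\nu_i = \nu + \sum_{j=1}^{i} \alpha(e_j)$ for every $i$; taking $i = m$ yields $\nu' - \nu = \sum_{j=1}^{m} \alpha(e_j)$. It then remains to rewrite this as $\rho(\sigma(c, L))$. Using the linearity of $\rho$ together with the defining equation $\alpha(e) = \rho(\sigma(e, L))$, we have
\[
\sum_j \alpha(e_j) \;=\; \rho\!\left(\sum_j \sigma(e_j, L)\right),
\]
and the sum on the right equals $\sigma(c, L)$ by additivity of signatures under concatenation: since $L$ is in general position with respect to $G$, every crossing of $c$ with a loop $\ell_i \in L$ occurs in the interior of exactly one edge $e_j$, so the mod-$2$ crossing counts simply add up.

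I do not anticipate a genuine obstacle; the statement is essentially an unpacking of the voltage-graph definition of $\tilde\surf$, combined with linearity of $\rho$ and additivity of $\sigma$. The only thing to be slightly careful about is that $\tilde c$ is well-defined edge by edge, but this follows from the standard path-lifting property of the covering $\tilde\surf \to \surf$ (equivalently, from the observation that at each visited vertex $(v_i, \nu_i)$ there is a unique edge of $\tilde G$ above $e_{i+1}$ leaving that vertex).
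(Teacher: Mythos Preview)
Your proposal is correct and follows essentially the same approach as the paper: trace the lift edge by edge, use that each lifted edge shifts the second coordinate by $\alpha(e)=\rho(\sigma(e,L))$, sum over the edges of~$c$, and conclude by linearity of~$\rho$ and additivity of the signature under concatenation. The paper's proof is just a terser version of your argument.
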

\begin{proof}
  By construction, if $e$ connects $w$ with~$w'$, then each lift of~$e$ connects vertices $(w,\eta)$ to $(w',\eta+\alpha(e))$ for some $\eta\in\Z^k_2$.  Thus $\nu'-\nu$ equals the sum, over all edges~$e$ of~$c$, of $\alpha(e)=\rho(\sigma(e,L))$.  This, in turn, equals $\rho(\sigma(c,L))$.  
\end{proof}
Although we do not make an explicit use of it here, it is worth mentioning that our techniques are (co)homological in nature. Indeed, we could have defined the signature map $\rho$ as a map from the first (co)homology group, over $\Z_2$, instead of being from $\Z_2^g$. The difference between these two being that the later assumes a choice of basis, i.e., a system of loops. Different choices of basis will indeed produce isomorphic covering spaces since this one is uniquely determined by the normal subgroup of the fundamental group induced by the kernel of the map $\rho$.

\section{Computing Shortest Closed Walks with Restriction on the Signature}\label{S:subroutine}

Our algorithm will use the subroutine given in the following proposition, which is a variation on the strategy used earlier by Chambers, Erickson, Fox, and Nayyeri~\cite[Section~5.2]{cefn-mcsg-23}.  It is a first step towards the proof of Theorem~\ref{T:common-tool}; however, instead of computing a simple closed curve in the cross-metric surface, we only compute a closed walk in the dual combinatorial surface.

\begin{proposition}\label{P:subroutine}
  Let $(\surf,G)$ be a combinatorial surface, orientable or not, of Euler genus~$g$ and size~$n$.  Moreover, let   $k$ be an integer, $\rho:\Z_2^g\to\Z_2^k$ a linear map, and $A\subseteq\Z_2^k$.  Assume that for each edge~$e$ of~$G$, the value of $\rho(\sigma(e,L))$ is given, for a fixed system of loops~$L$ of~$\surf$ in general position with respect to~$G$.  Given this, we can compute in $O(g^38^kkn\log n)$ time a shortest closed curve~$c$ in~$G$ (a shortest closed walk) such that $\rho(\sigma(c,L))\in A$.
\end{proposition}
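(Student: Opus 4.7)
The plan is to reduce the problem to a multiple-source shortest path (MSSP) computation inside the subhomology cover. First I would define $\alpha\colon E(G)\to\Z_2^k$ by $\alpha(e)=\rho(\sigma(e,L))$; these values are provided in the input, so Lemma~\ref{L:rhocovers-algo} lets me build the subhomology cover $(\tilde\surf,\tilde G)$ in $O(2^k k n)$ time. Because $\chi(\tilde\surf)=2^k\chi(\surf)$, the cover has size $n'=O(2^k n)$ and Euler genus $g'=O(2^k g)$.

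By Lemma~\ref{l:endpoint}, closed walks $c$ in $G$ based at $v$ with $\rho(\sigma(c,L))=\tau$ are in length-preserving bijection with paths in $\tilde G$ from $(v,0)$ to $(v,\tau)$, so the task reduces to minimizing $d_{\tilde G}((v,0),(v,\tau))$ over $v\in V(G)$ and $\tau\in A$. If $0\in A$, the constant walk of length zero is optimal; otherwise any optimal walk $c^*$ satisfies $\rho(\sigma(c^*,L))\neq 0$, hence $\sigma(c^*,L)\neq 0$, so $c^*$ is non-trivial in $\Z_2$-homology and therefore non-contractible.

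Next I would apply Lemma~\ref{L:sh-sysloops} in $O(gn+n\log n)$ time to obtain $2g$ shortest paths $P_1,\ldots,P_{2g}$ in $G$ such that every non-contractible closed walk shares a vertex with some $P_i$; in particular $c^*$ passes through some vertex of some $P_{i^*}$. For each $i$, I would take the lift $\tilde P_i$ of $P_i$ that starts in sheet $0\in\Z_2^k$ — a simple path, hence a tree, in $\tilde G$ — and cut $\tilde\surf$ along $\tilde P_i$. This preserves the Euler genus and puts every vertex $(v,0)$ with $v\in V(P_i)$ on the boundary of a single new face. I would then run the MSSP algorithm of Cabello, Chambers, and Erickson~\cite{cce-msspe-13} (analogously to \cite[Section~5.2]{cefn-mcsg-23}) with source set $\{(v,0):v\in V(P_i)\}$; on a surface of genus $g'$ and size $n'$ this takes $O((g')^2 n'\log n')=O(8^k g^2 n(k+\log n))$ time and afterwards answers each distance query $d_{\tilde G}((v,0),t)$ in $O(\log n')$ time. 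I would enumerate the at most $|V(P_i)|\cdot|A|\le n\cdot 2^k$ pairs $(v,\tau)$, keep the minimum over all $i$, and reconstruct the optimal walk by extracting the corresponding path from the MSSP structure and projecting to $G$. Summing the preprocessing over all $2g$ choices of $i$ yields the stated $O(g\cdot 8^k g^2 n(k+\log n))=O(g^3 8^k k n\log n)$ total.

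The hard part will be the MSSP subroutine: the result of~\cite{cce-msspe-13} is stated for orientable surfaces, while $(\tilde\surf,\tilde G)$ may well be non-orientable. I expect to handle this either by verifying that the algorithm extends to the non-orientable setting, or by passing to the orientation double cover of $\tilde\surf$ at the cost of a constant factor in the running time. A secondary subtlety is that for $\tau\neq 0$ the target $(v,\tau)$ does \emph{not} lie on the cut lift $\tilde P_i$, so the MSSP output must truly give source-to-target distances (for a specific source on the source face, to any target), not merely source-set-to-target distances; this is the standard functionality of Klein-style MSSP but worth stating explicitly.
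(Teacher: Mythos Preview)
Your overall architecture matches the paper's: build the subhomology cover, take the $2g$ shortest paths from Lemma~\ref{L:sh-sysloops}, cut the cover along a lift of each, and run MSSP on the resulting surface. The paper resolves the non-orientability of the cover exactly as you anticipate, by passing to the orientation double cover (this is packaged as Lemma~\ref{L:multiple-shortest-paths}).

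There is, however, a real gap. After you cut $\tilde\surf$ along $\tilde P_i$, distances in the cut surface can only be \emph{at least} the distances in $\tilde G$; you never argue the reverse inequality. Concretely, a shortest path in $\tilde G$ from a source on $\tilde P_i$ to its target $(v,\tau)$ may weave back through $\tilde P_i$ several times, and your cut then destroys it. The paper fills this with a rerouting step (Lemma~\ref{L:subroutine}): given such a path $\tilde\ell$, let $\tilde w$ be its \emph{last} vertex on $\tilde P_i$ and replace the prefix of $\tilde\ell$ up to $\tilde w$ by the subpath of $\tilde P_i$ with the same endpoints. Since $P_i$ is a shortest path in $G$, each of its lifts is a shortest path in $\tilde G$, so the replacement is no longer; the new path now meets $\tilde P_i$ only in an initial subpath and hence survives the cut. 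This argument is short but essential, and it is not subsumed by the MSSP black box or by the citation to \cite[Section~5.2]{cefn-mcsg-23}; you should state it explicitly. (In particular, it is this lemma, not the MSSP mechanics, that makes the ``specific-source-to-target'' functionality you flag actually suffice.)

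A smaller slip: the lift $\tilde P_i$ that starts at $(v_0,0)$ does \emph{not} stay in sheet~$0$; its vertices are $(v_j,\nu_j)$ with $\nu_j$ the running $\alpha$-sum along $P_i$. Thus the face created by cutting is bordered by the $(v_j,\nu_j)$, not by $\{(v,0):v\in V(P_i)\}$. The fix is to take the actual lift vertices as sources and query distances to $(v_j,\nu_j+\tau)$ for $\tau\in A$; by the deck-transformation symmetry of the cover this computes the same minimum you intended, so the correction is cosmetic, but as written your source set need not lie on the new face.
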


For the proof, we start by computing the subhomology cover~$(\tilde\surf,\tilde G)$ associated to $\alpha(e)=\rho(\sigma(e,L))$ using Lemma~\ref{L:rhocovers-algo}.  We also apply Lemma~\ref{L:sh-sysloops} to compute a set of $2g$ shortest paths $p_1,\ldots,p_{2g}$ in~$G$ that intersects every non-contractible closed walk in~$G$.

The following lemma is inspired by Chambers, Erickson, Fox, and Nayyeri~\cite[Lemma~5.4]{cefn-mcsg-23}.%
\begin{restatable}{lemma}{Lsubroutine}\label{L:subroutine}
  Some shortest closed curve~$c$ in~$G$ such that $\rho(\sigma(c,L))\in A$ has the following form: It is the projection of a shortest path in~$(\tilde \surf,\tilde G)$ that starts with a subpath of a lift of some~$p_i$ and is otherwise disjoint from that lift.
\end{restatable}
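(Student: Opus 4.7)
The plan is a standard cut-and-paste argument that starts from an arbitrary shortest closed walk $c$ satisfying the homological constraint and deforms its lift in the subhomology cover until it has the required structure. The essential ingredient is that $\pi:\tilde G\to G$ preserves edge weights, so any lift of a shortest path in $G$ is again a shortest path in $\tilde G$ between its endpoints: otherwise, the projection of a shorter lift would beat the original shortest path.

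First I would dispose of the trivial case $0\in A$ via the zero-length walk at any vertex of any $\tilde p_i$; from then on I assume $c$ is non-contractible. By Lemma~\ref{L:sh-sysloops}, $c$ shares a vertex $u$ with some $p_i$. I re-basepoint $c$ at $u$ (which preserves both its length and its signature) and lift it to a path $\tilde c$ starting at the unique lift $\tilde u=(u,\nu)$ of $u$ on a chosen lift $\tilde p_i$ of $p_i$; this lift of $p_i$ exists because each lift of a vertex of $p_i$ is traversed by exactly one lift of $p_i$. By Lemma~\ref{l:endpoint}, $\tilde c$ ends at some $(u,\nu')$ with $\nu'-\nu\in A$. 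I may further insist that $\tilde c$ is a shortest path in $\tilde G$ between these endpoints --- automatically simple, since edge weights are positive --- because otherwise any such shortest path projects to a closed walk of length at most $|c|$ satisfying the same constraint and so can be used in place of $c$.

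Second I would run an exchange argument. Among all triples $(c,\tilde c,\tilde p_i)$ produced as above, pick one maximizing the number of edges $k$ in the longest initial subpath $\alpha$ of $\tilde c$ contained in $\tilde p_i$, and write $\tilde c=\alpha\,\beta$. I claim $\beta$ is otherwise disjoint from $\tilde p_i$. If not, let $y$ be the first revisit of $\tilde p_i$ along $\beta$, split $\beta=\beta_1\,\beta_2$ at $y$, and let $r$ be the subpath of $\tilde p_i$ from the endpoint of $\alpha$ to $y$; then $|r|\le|\beta_1|$ since $\tilde p_i$ is a shortest path. If $r$ extends $\alpha$ in the same direction along $\tilde p_i$, strict inequality contradicts shortness of $\tilde c$, while equality produces a walk $\alpha\,r\,\beta_2$ of the same length whose initial overlap is $\alpha\,r$; this walk must be simple (else a shortcut would contradict shortness), so it is a shortest path between the same endpoints, contradicting the maximality of $k$. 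If instead $r$ doubles back across $\alpha$, the portion of $r$ from $\tilde u$ to $y$ has length $|r|-|\alpha|\le|\beta_1|-|\alpha|$, and concatenating it with $\beta_2$ yields a walk from $\tilde u$ to $(u,\nu')$ strictly shorter than $\tilde c$, again a contradiction.

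The main obstacle is the direction bookkeeping in the exchange step: the revisit $y$ can sit either ahead of $\alpha$ on $\tilde p_i$ or on the opposite side of $\tilde u$, and only the first case directly enlarges the initial overlap, while the second must be ruled out through shortness of $\tilde c$. Verifying that each rerouting preserves the endpoints $(u,\nu)$ and $(u,\nu')$, and hence the homological datum $\nu'-\nu$ of the projected closed walk via Lemma~\ref{l:endpoint}, is straightforward since every replacement is between paths with matching endpoints.
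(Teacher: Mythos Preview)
Your argument is correct, but it is considerably more elaborate than the paper's. The paper avoids both the preliminary replacement of $\tilde c$ by a shortest path and the extremal case analysis: after lifting $\ell$ to $\tilde\ell$ starting on~$\tilde p_i$, it simply lets $\tilde w$ be the \emph{last} vertex of $\tilde\ell\cap\tilde p_i$ along~$\tilde\ell$ and replaces the entire prefix of~$\tilde\ell$ up to~$\tilde w$ by the subpath of~$\tilde p_i$ with the same endpoints. The resulting path automatically starts along~$\tilde p_i$ and is thereafter disjoint from it (by the choice of~$\tilde w$), is no longer than~$\tilde\ell$ (since any subpath of the shortest path~$\tilde p_i$ is itself shortest), and has the same endpoints, hence the same value under $\rho\circ\sigma$ by Lemma~\ref{l:endpoint}. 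That it is in fact a shortest path between its endpoints then follows for free, since a strictly shorter path would project to a strictly shorter closed walk with the same constraint. Your maximization-of-initial-overlap device with the two-direction case split reaches the same conclusion and is sound (in particular your handling of the ``doubling back'' case, which only arises when $|\alpha|>0$ so that the strict inequality $2|\alpha|>0$ is available), but the paper's ``take the last intersection'' trick does the job in a single rewriting step without any induction or extremal choice.
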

\begin{figure}\def\svgwidth{0.8\linewidth}
    \centering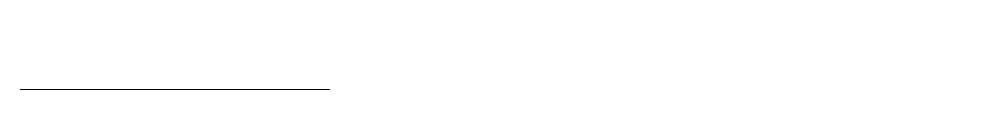
    \caption{Proof of Lemma~\ref{L:subroutine}: the construction of~$c'$.}
    \label{F:shortening}
\end{figure}
\begin{proof}
  If $A$ contains the zero vector, then any trivial closed curve (reduced to a single vertex) satisfies the desired property, so the lemma is trivially true.  So, henceforth, we assume that $A$ does not contain the zero vector.  
  
  Let $c$ be a shortest closed curve in~$G$ such that $\rho(\sigma(c,L))\in A$.  Because $A$ does not contain the zero vector, $c$ is non-contractible and must thus meet some path~$p_i$.  We turn $c$ into a loop~$\ell$ by basing it at an arbitrary vertex of~$c\cap p_i$.  Let $\tilde p_i$ be a path in~$\tilde G$ that is a lift of~$p_i$ in the subhomology cover~$\tilde \surf$, and let $\tilde\ell$ be a lift of~$\ell$ that starts on~$\tilde p_i$. See Figure~\ref{F:shortening}.  Let $\tilde w$ be the last vertex of~$\tilde\ell\cap\tilde p_i$ encountered when traversing~$\tilde\ell$.  Let $\tilde\ell'$ be the path in~$\tilde G$ with the same endpoints as~$\tilde\ell$ obtained from~$\tilde\ell$ by replacing its part before~$\tilde w$ with the subpath of~$\tilde p_i$ with the same endpoints.  Finally, let $\ell'$ be the projection of~$\tilde\ell'$ on~$\surf$, and $c'$ be the associated closed curve.

  Since $\tilde\ell$ and~$\tilde\ell'$ have the same endpoints, Lemma~\ref{l:endpoint} implies that $\rho(\sigma(c',L))=\rho(\sigma(c,L))$, the latter being in~$A$ by hypothesis.  Moreover, $\tilde\ell'$ is no longer than~$\tilde\ell$, because $\tilde p_i$ is a shortest path, since any lift of a shortest path is itself a shortest path.  Thus, $c'$ is no longer than~$c$, and satisfies the desired properties.
\end{proof}

We will also need the following immediate consequence of results by Erickson, Fox, and Lkhamsuren~\cite{efl-hmcpf-18} and Cabello, Chambers, and Erickson~\cite{cce-msspe-13}.
\begin{restatable}{lemma}{Lmultipleshortestpaths}\label{L:multiple-shortest-paths}
  Given a combinatorial surface~$(\surf,G)$, orientable or not, of Euler genus~$g$ and size~$n$, with a distinguished face~$b$, one can, after an $O(g^2n\log n)$-time preprocessing, compute the distance from any given vertex incident with~$b$ to any other vertex in $O(\log n)$ time.  
\end{restatable}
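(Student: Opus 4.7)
The plan is to derive this lemma as a direct black-box consequence of two multiple-source shortest paths (MSSP) results. For an orientable combinatorial surface, Cabello, Chambers, and Erickson~\cite{cce-msspe-13} construct, after $O(g^2 n \log n)$ preprocessing, a persistent data structure that answers the following query in $O(\log n)$ time: given a source vertex $s$ incident with a prescribed face $b$ and an arbitrary target vertex $t$, return the distance from~$s$ to~$t$ in~$G$. The core of their construction is to fix a shortest-path tree rooted at one vertex of~$b$, continuously slide the root around~$b$, and record the resulting edge-swap events in a persistent link-cut tree so that the shortest-path tree corresponding to any source on~$b$ can be recovered implicitly; distance queries then amount to a lowest-common-ancestor-type lookup.

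For non-orientable $\surf$, I would invoke the extension of Erickson, Fox, and Lkhamsuren~\cite{efl-hmcpf-18}, who establish that the same MSSP interface and asymptotic bounds carry over to non-orientable combinatorial surfaces. A conceptually simple way to realize this is to pass to the orientable double cover $(\tilde\surf, \tilde G)$, whose Euler genus is $O(g)$ and whose size is $O(n)$: the distinguished face $b$ lifts to two faces, each source vertex $s$ on $b$ has two lifts $\tilde s_1, \tilde s_2$, each target $t$ has two lifts $\tilde t_1, \tilde t_2$, and $d_G(s,t)$ equals the minimum of the four corresponding distances in $\tilde G$, each obtainable in $O(\log n)$ time from the orientable MSSP structure built on the cover.

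Combining the two cases yields the lemma with the claimed preprocessing and query bounds. There is essentially no new algorithmic content: the only minor subtlety is matching the shape of our statement (sources restricted to be incident with $b$, targets arbitrary) to the input interface of the MSSP data structure, which in the orientable case is immediate and in the non-orientable case is handled by the double-cover reduction above. I do not expect any substantial obstacle, since both cited works provide exactly the required query model.
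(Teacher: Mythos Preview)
Your approach is correct and essentially identical to the paper's: invoke an MSSP black box for the orientable case and reduce the non-orientable case to it via the orientable double cover.  The only discrepancy is that you have the two citations swapped.  In the paper, the orientable $O(g^2 n\log n)$ MSSP result (with no uniqueness-of-shortest-paths assumption) is attributed to Erickson, Fox, and Lkhamsuren~\cite[Theorem~5.1]{efl-hmcpf-18}, while the double-cover reduction for non-orientable surfaces is taken from Cabello, Chambers, and Erickson~\cite[Section~4.4]{cce-msspe-13}; you cite them the other way around.  This does not affect the mathematical content, and your fallback double-cover argument is exactly what the paper uses.
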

\begin{proof}
  The case where the surface is orientable is precisely the result by Erickson, Fox, and Lkhamsuren~\cite[Theorem~5.1]{efl-hmcpf-18}.  Moreover, as explained by Cabello, Chambers, and Erickson~\cite[Section~4.4]{cce-msspe-13}, the non-orientable case reduces to the orientable one, by passing to the orientable double cover. 
\end{proof}

\begin{proof}[Proof of Proposition~\ref{P:subroutine}]
  As indicated above, we first compute the subhomology cover~$(\tilde \surf,\tilde G)$ in $O(2^kkn)$ time (Lemma~\ref{L:rhocovers-algo}), and the paths~$p_1,\ldots,p_{2g}$ in $O(gn+n\log n)$ time (Lemma~\ref{L:sh-sysloops}).
  
  Fix $i=1,\ldots,2g$.  We show below how to compute a shortest path in~$(\tilde \surf,\tilde G)$ that starts with a subpath of a lift of~$p_i$, is otherwise disjoint from that lift, and projects to a closed curve~$c$ such that $\rho(\sigma(c,L))\in A$.
  
  Let $\tilde p_i$ be a lift of~$p_i$ and let $(v_0,\nu_0), (v_1,\nu_1), \ldots, (v_m,\nu_m)$ be the sequence of vertices on~$\tilde p_i$.  We consider the combinatorial surface~$\tilde\surf'$ obtained by cutting $\tilde \surf$ along the path~$\tilde p_i$, thus forming a boundary, and then attaching a disk~$b$ to the resulting boundary component; each interior vertex~$(v_t,\nu_t)$ of~$\tilde p_i$, $1\le t\le m-1$, now corresponds to two vertices, $(v_t,\nu_t)^+$ and~$(v_t,\nu_t)^-$; each duplicated edge has the same weight as its original. The vertices $(v_t,\nu_t)^\pm$ ($0\le t\le m$, where for convenience $(v_0,\nu_0)^\pm$ and $(v_m,\nu_m)^\pm$ denote $(v_0,\nu_0)$ and~$(v_m,\nu_m)$, respectively) all lie on the boundary of the face~$b$ of~$\tilde\surf'$.  By Lemma~\ref{l:endpoint}, it suffices to compute a shortest path, in this combinatorial surface, among all paths from some vertex $(v_t,\nu_t)^\pm$ to some corresponding vertex in the set $\Setbar{(v_t,\nu_t+a)}{a\in A}$.  Lemma~\ref{L:multiple-shortest-paths} allows us to do this: $(\tilde\surf,\tilde G)$ is a combinatorial surface of size $O(2^kn)$ and genus $O(2^kg)$ (by Lemma~\ref{L:rhocovers-algo}), and the same holds for the combinatorial surface in which we perform the computation; moreover, there are $O(2^kn)$ pairs of vertices between which we need to compute the distance.  Thus, the preprocessing step takes $O(g^22^{3k}kn\log n)$ time, and the distance computations take $O(2^kkn\log n)$.  Once we have computed a shortest distance between these pairs of points, we can compute an actual shortest path using Dijkstra's algorithm, without overhead.

  Applying this for each $i=1,\ldots,2g$, and returning the projection of the overall shortest path, we obtain the result.
\end{proof}

\section{Proof of Theorem~\ref{T:common-tool}}\label{S:proof-common-tool}

We will need the following intuitive lemma.
\begin{restatable}{lemma}{Lhomopreserve}\label{L:homopreserve}
  On a cross-metric surface~$(\surf,G^*)$, orientable or not, let $L$ be a system of loops and $c_1$ and~$c_2$ be two closed curves that cross each edge of~$G^*$ with the same parity.  Then $\sigma(c_1,L)=\sigma(c_2,L)$.
\end{restatable}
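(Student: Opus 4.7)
The plan is to prove $\sigma(c_1,L)=\sigma(c_2,L)$ by showing that $c_1$ and $c_2$ represent the same class in $H_1(\surf;\Z_2)$, from which equality of the coordinates of the signature vectors follows since, by definition, $\sigma(c,L)_i$ is the mod~$2$ geometric intersection number of $c$ with the $i$th loop~$\ell_i$ of~$L$, and this is a $\Z_2$-homological invariant of both closed curves involved.

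Concretely, I would first recall that the $\Z_2$-homology of $\surf$ can be computed from the cellular chain complex associated with the CW structure of $G$ dual to $G^*$. Given a closed curve $c$ on~$\surf$ in general position with~$G^*$, I associate to it a $\Z_2$ 1-chain $z_c$ in~$G$: for each edge $e\in G$, take the coefficient to be the parity of the number of crossings of $c$ with the dual edge $e^*\in G^*$. I would then check that $z_c$ is a 1-cycle, because at each vertex of~$G$ (which sits inside a face~$f$ of~$G^*$) the incident edges of~$G$ correspond to the edges on $\partial f$, and any closed curve on~$\surf$ in general position with~$\partial f$ crosses~$\partial f$ an even number of times (as $f$ is a disk and the arcs of $c$ inside $f$ have endpoints on $\partial f$).

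The key claim is that $[c]=[z_c]$ in $H_1(\surf;\Z_2)$. I would prove this by a standard homotopy inside each face~$f$ of~$G^*$: choose a basepoint vertex $v_f\in G$ in~$f$; each arc of~$c$ inside~$f$ can be homotoped rel its endpoints to a path passing through $v_f$, after which each original crossing of~$c$ with an edge $e^*\in\partial f$ becomes a traversal (followed by an immediate retraversal in pairs) of the dual edge $e\in G$. Homotoping~$c$ in every face in this manner produces a walk in~$G$ freely homotopic to~$c$, in which every edge $e\in G$ is traversed with the same parity as the number of crossings of~$c$ with~$e^*$; reducing mod~$2$ yields exactly $z_c$.

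Applying this to $c_1$ and $c_2$, the hypothesis gives $z_{c_1}=z_{c_2}$, hence $[c_1]=[c_2]$ in $H_1(\surf;\Z_2)$. Pairing with $[\ell_i]$ for each $i$ (say, via the $\Z_2$ intersection form on~$\surf$, or simply by computing $[\ell_i]\cdot[c_j]$ as the mod~$2$ count of transverse intersections) yields $\sigma(c_1,L)_i=\sigma(c_2,L)_i$ for every~$i$, and thus $\sigma(c_1,L)=\sigma(c_2,L)$. The only slightly delicate step is justifying that the face-by-face homotopy produces a walk in~$G$ in the same free-homotopy class as~$c$; once this is noted, everything else is a direct cellular-homology computation.
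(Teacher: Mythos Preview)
Your proposal is correct and follows essentially the same approach as the paper: both push each $c_i$ to the dual graph~$G$ (the paper to a homotopic closed walk~$c'_i$, you to the $\Z_2$-chain~$z_{c_i}$), observe that the hypothesis makes the two pushed objects agree mod~2, and then invoke the fact that the mod-2 crossing number with each~$\ell_i$ is a homotopy/homology invariant. The paper is terser, citing the folklore Reidemeister-moves argument for invariance of crossing parity, whereas you spell out the cellular-homology framework and the intersection form; these are the same content in slightly different language.
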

\begin{proof}
   The lemma is obvious with homology tools:  The assumption implies that $c_1$ and~$c_2$ are homologous, so the conclusion holds.  Below, we present an alternate, somewhat more self-contained, proof.

   Let $G$ be the dual graph of~$G^*$.  For $i=1,2$, we push $c_i$ homotopically to the closed walk~$c'_i$ in~$G$ defined as follows:  If $c_i$ crosses edges $e^*_1,\ldots,e^*_k$ of~$G^*$ in this order, then $c'_i$ traverses the corresponding edges $e_1,\ldots,e_k$ of~$G$ in the same order.  We have (1) $\sigma(c_i,L)=\sigma(c'_i,L)$; indeed, it is a folklore result that the parity of the number of crossings between two closed curves depends only on the homotopy of these closed curves (e.g., because every homotopy between two multicurves can be realized by Reidemeister moves~\cite{gs-mcmcr-97}).   Moreover, for each edge~$e$ of~$G$, the curves $c'_1$ and~$c'_2$ traverse $e$ with the same parity, which implies (2) $\sigma(c'_1,L)=\sigma(c'_2,L)$.  The result follows from (1) and~(2).
\end{proof}

\begin{proof}[Proof of Theorem~\ref{T:common-tool}]
  Let $(\surf,G^*)$ be our input cross-metric surface, and let $G$ be the dual graph of~$G^*$.  We start by computing, in $O(gn)$ time, a standard system of loops~$L'$ in general position with respect to~$G$ such that each loop crosses each edge of~$G$ at most 100 times; for this purpose, somewhat counterintuitively, we apply Proposition~\ref{P:std-sysloops} in the cross-metric surface~$(\surf,G)$.  For each edge $e$ of~$G$, we can now compute $\sigma(e,L')$, also in $O(gn)$ time. 
  
  Let $\rho':=\rho\circ\varphi$, where $\varphi$ is the map from Lemma~\ref{L:change-basis}.  For each edge~$e$ of~$G$, we compute $\rho'(\sigma(e,L'))$, which by Lemma~\ref{L:change-basis} equals $\rho(\sigma(e,L))$, for some fixed canonical system of loops~$L$.  This takes $O(kgn)$ time (there is no need to precompute the matrix of~$\rho'$ since its computation takes $O(kg^2)$ time). We then apply Proposition~\ref{P:subroutine} in the combinatorial surface~$(\surf,G)$ with the map~$\rho'$ and the standard system of loops~$L'$: In $O(g^38^kkn\log n)$ time, we compute a shortest closed walk~$c$ in~$G$ such that $\rho'(\sigma(c,L'))\in A$, or equivalently $\rho(\sigma(c,L))\in A$.

  The remaining part of the proof is to turn $c$ into a simple cycle.  First, we build a map~$\mu$ from the edges of~$G^*$ to~$\{0,1,2\}$ based on~$c$, as follows.  Let $e$ be an edge of~$G$.  If $c$ does not traverse~$e$, then we set $\mu(e)=0$.  If $c$ traverses~$e$ a positive, even number of times, then we set $\mu(e)=2$.  Otherwise, we set $\mu(e)=1$.  Then, we apply Lemma~\ref{L:single-cycle-odd-crossing} to $(\surf,G)$ and~$\mu$.  This computes in linear time a simple closed curve~$c'$ in the cross-metric surface~$(\surf,G^*)$ among those that cross each edge~$e^*$ of~$G^*$ exactly $\mu(e)$ times.  We return~$c'$.  Indeed, by definition of~$\mu$, it has multiplicity at most two and it is no longer than~$c$, which by Lemma~\ref{L:homopreserve} implies that it is a shortest closed curve such that $\rho(\sigma(c',L))\in A$.
\end{proof}

\section{NP-Hardness of Computing a Shortest Orienting Closed Curve}\label{S:hardness}

In this section, we prove that it is NP-hard to decide, given a cross-metric surface and an integer~$k$, whether a shortest orienting closed curve has length at most~$k$ (Theorem~\ref{T:Nphardness}).  The proof is a variation on the proof by Chambers, Colin de Verdière, Erickson, Lazarus, and Whittlesey~\cite{ccelw-scsh-08} that computing the shortest splitting closed curve is NP-hard.  For this proof, it is easier to reason in the realm of combinatorial surfaces.

\begin{figure}
  \includegraphics[width=0.9\textwidth]{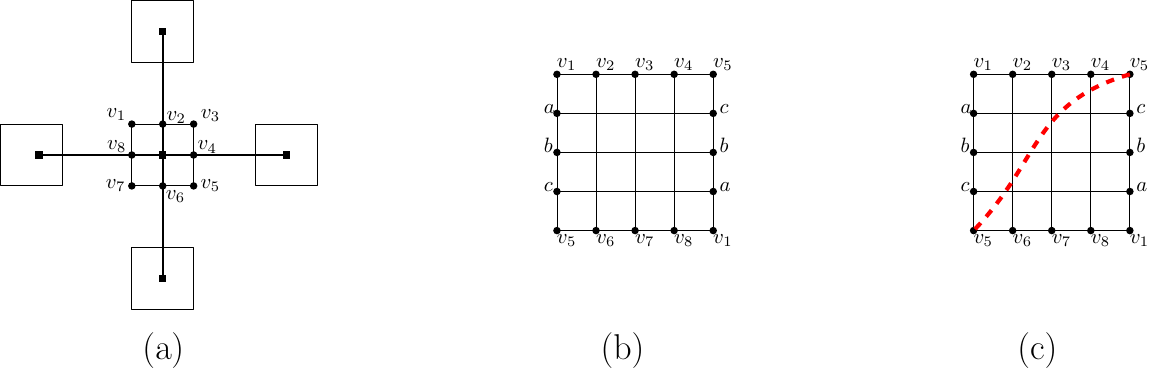}
  \caption{The reduction for the proof of Theorem~\ref{T:Nphardness}.  From left to right: (a) A part of the original grid graph~$G$, where each vertex is overlaid with cycles of length eight, each denoted $v_1,\ldots,v_8$; (b) the attached Möbius band (vertices marked $a$, $b$, $c$, $v_1$, and~$v_5$ are identified in pairs; the boundary of the Möbius band is thus cycle $v_1,\ldots,v_8$). (c) Part of the system of loops inside a Möbius band.}
\centering
\label{f:hamiltonian_1}
\end{figure}

\TNphardness*
\begin{proof}
  A \emph{grid graph} of size~$n$ is a graph induced by a set of $n$~points in the two-dimensional integer grid.  
  We reduce the problem of deciding whether there exists a Hamiltonian cycle in grid graphs, which is NP-hard~\cite{ips-hpgg-82}, to that of computing a shortest orienting closed curve.

  Let $G$ be a grid graph of size $n$ embedded in the sphere and without loss of generality assume that $n\geq 3$. First, we overlay $G$ with $n$ small cycles of length~8
  each centered on a vertex of~$G$; see Figure~\ref{f:hamiltonian_1}(a).  We then remove the interior of these cycles, obtaining a surface of genus zero with $n$~boundary components.
  For each of these boundary components we associate a distinct Möbius band coming from a $4\times4$-grid as in Figure~\ref{f:hamiltonian_1}(b). 
  We identify the boundary of each Möbius band with the associated boundary component\footnote{To illustrate this identification, let $v_1,\dots,v_8$ denote the vertices of a single boundary component of the surface as in Figure~\ref{f:hamiltonian_1}(a) and of the boundary of the associated Möbius band as in Figure~\ref{f:hamiltonian_1}(b). We proceed by identifying the edge $v_iv_{i+1}$ of the boundary component with the edge $v_iv_{i+1}$ of the associated Möbius band.}.    
  Let $G'$ be the resulting graph, which is embedded in a non-orientable surface~$\surf$ of genus~$n$.  We assign weights to each edge of~$G'$:  Each edge in a Möbius band (including those edges in the cycles of length~8) has weight~$\varepsilon<1/(12n)$, and every other edge has weight one.  Thus, $(\surf,G')$ is a combinatorial surface with size $O(n)$.   We claim that $G$ has a Hamiltonian cycle if and only if there exists an orienting closed curve of length at most $n+1/2$ in the combinatorial surface~$(\surf,G')$. We prove this below, which in turn will conclude the proof.

  As an auxiliary tool, we introduce a canonical system of loops~$L$ based at an arbitrary basepoint of~$\surf$, each loop associated to a given vertex of~$G$.
  More precisely, for a given vertex of $G$ let $v_1,\dots,v_8$ be the vertices of its corresponding $8$-cycle, as in Figure~\ref{f:hamiltonian_1}(a), then its associated loop is a small perturbation of the concatenation of a loop that goes from the basepoint to~$v_5$, goes ``through the crosscap'' from one copy of~$v_5$ to the other copy, and finally goes back from~$v_5$ to the basepoint via the same path, see Figure~\ref{f:hamiltonian_1}(c).
  
\begin{figure}
  \includegraphics[width=0.9\textwidth]{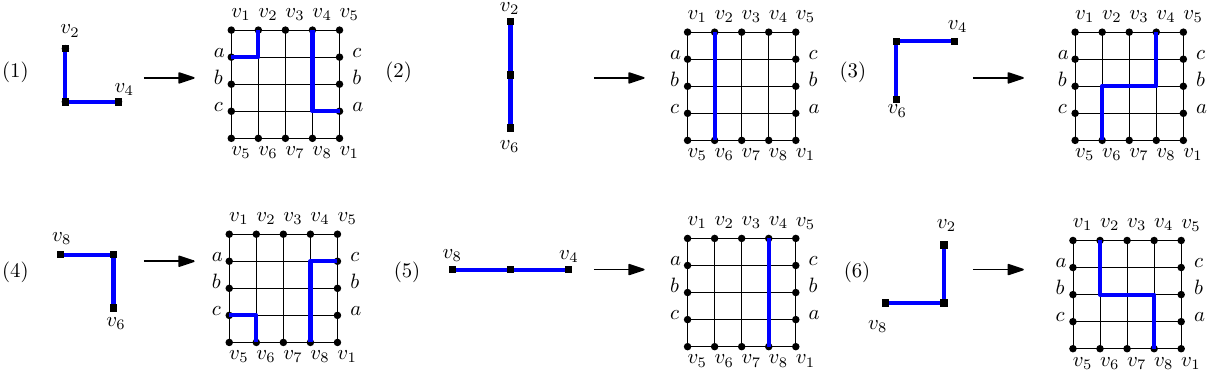}
  \caption{Local modifications at every vertex of a Hamiltonian cycle of~$G$ to turn it into a simple orienting cycle in the combinatorial surface $(\surf,G)$.}
\centering
\label{f:hamiltonian_2}
\end{figure}

  If $G$ has a Hamiltonian cycle, thus of length~$n$ in~$G$, we can easily modify that cycle (Figure~\ref{f:hamiltonian_2}) in such a way that it becomes a simple cycle of length at most~$n+6n\varepsilon<n+1/2$ in~$G'$ crossing each loop in~$L$ exactly once, and thus orienting (Lemma~\ref{L:top-charact}).

  Conversely, any orienting cycle~$c$ in $(\surf,G')$ of length at most $n+1/2$ must cross each loop in~$L$ an odd number of times (Lemma~\ref{L:top-charact}), and thus
  pass through each of the $n$ Möbius bands.
  Moreover, it uses at most $n$ edges of weight $1$ in~$G'$.  Thus, it corresponds, in~$G$, to a Hamiltonian cycle. Indeed, since $n\geq 3$, any closed walk without vertex repetition in~$G$ is a Hamiltonian cycle in~$G$. 
\end{proof}

\section{Conclusion}

We conclude with several remarks:
\begin{itemize}
\item Our main tool, Theorem~\ref{T:common-tool}, also holds for orientable surfaces.  In that case, the proof is simpler, bypassing the detour with standard systems of loops:  We compute a canonical system of loops using Lemma~\ref{L:lpvv}, compute the signature of every edge with respect to that system of loops, apply Proposition~\ref{P:subroutine}, and conclude as in the non-orientable case.  
\item There is a variant of our algorithm for non-orientable surfaces in which we can completely bypass the computation of a standard system of loops.  Let us sketch the algorithm here.  First, we compute an arbitrary system of loops~$L'$, in $O(gn)$ time.  Then, we compute the ``change of coordinates'' matrices between~$L'$ and some canonical system of loops~$L$, as in the statement of Lemma~\ref{L:change-basis}; because $L'$ is not fixed, we cannot make such matrices explicit, but since our proof is algorithmic (and indeed follows the proof of the classification theorem by Brahana~\cite{b-sc2dm-21,stillwell-topology-93}), we can actually do this using the same principles, in time polynomial in~$g$ (without dependence on~$n$).  We emphasize that, after each cut-and-paste step, it is enough to remember only the signature vector $\sigma(\ell,L')$ for each loop~$\ell$ in the current system of loops.  More precisely, $L'$ is a polygonal schema of size $O(g)$; one can, in $O(g^2)$ time, obtain a sequence of $O(g)$ cut-and-paste operations to transform $L'$ to~$L$.  Each such cut-and-paste operation replaces a loop, say~$\ell_1$, with a new one, say~$\ell_2$; computing $\sigma(\ell_2,L')$ takes $O(g^2)$ time, because $\ell_2$ is homotopic to the concatenation of $O(g)$ loops in the current system of loops, for which we know the signatures with respect to~$L'$.  Thus, in $O(g^3)$ total time, one can compute $\sigma(\ell,L')$ for all the loops $\ell\in L$, and then using matrix inversion on $g\times g$-matrices one can compute $\sigma(\ell',L)$ for each loop $\ell'\in L'$, which are the values we need in the proof of Theorem~\ref{T:common-tool}.

The version presented here, while perhaps a bit longer, has some benefits compared to the route sketched above: (1) Our Propositions~\ref{P:matousek} and~\ref{P:std-sysloops} are of independent interest and are likely to be reused in future work; (2) having an explicit relation on the homology bases of the standard and canonical systems of loops is somewhat nicer conceptually.
\item Our techniques allow to compute a shortest overall one-sided closed curve in $O(g^3n\log n)$ time (without controlling whether it is orienting or not); indeed, just apply Theorem~\ref{T:common-tool} with $\rho(c_1,\ldots,c_g)=\sum_ic_i$ and $A=\Set{1}$.
\item Our algorithms run in $O(n\log n)$ for fixed genus.  It might be possible, in the case of two-sided curves, to obtain an algorithm with running time $O(n\log\log n)$, using the techniques by Chambers, Erickson, Fox, and Nayyeri~\cite[Section~4.3]{cefn-mcsg-23}, though with a hidden dependence on the genus that is at least exponential.
\item Finally, we have only considered non-separating closed curves, and leave open the complexity of computing a shortest separating closed curve with a given topological type (specifying the topology of the two surfaces with boundary resulting from cutting along it).   Even on orientable surfaces, the following problem is fixed-parameter tractable in the genus~\cite[Theorem~6.1]{ccelw-scsh-08} but apparently neither known to be NP-hard nor polynomial-time solvable: Given an orientable surface~$\surf$ of (orientable) genus~$g$, compute a shortest simple closed curve that separates~$\surf$ into a surface of (orientable) genus one and a surface of (orientable) genus $g-1$.
\end{itemize}

\subsection*{Acknowledgments} 
We would like to thank Arnaud de Mesmay for stimulating discussions, and the anonymous reviewers for their useful comments.



\begin{thebibliography}{10}

\bibitem{ah-irgg-77}
Michael~O. Albertson and Joan~P. Hutchinson.
\newblock The independence ratio and genus of a graph.
\newblock {\em Trans. Amer. Math. Soc.}, 226:161--173, 1977.
\newblock \href {https://doi.org/10.2307/1997946} {\path{doi:10.2307/1997946}}.

\bibitem{a-bt-83}
Mark~A. Armstrong.
\newblock {\em Basic topology}.
\newblock Undergraduate Texts in Mathematics. Springer-Verlag, New York-Berlin,
  1983.
\newblock \href {https://doi.org/10.1007/978-1-4757-1793-8}
  {\path{doi:10.1007/978-1-4757-1793-8}}.

\bibitem{bcfn-mchbs-17}
Glencora Borradaile, Erin~W. Chambers, Kyle Fox, and Amir Nayyeri.
\newblock Minimum cycle and homology bases of surface-embedded graphs.
\newblock {\em J. Comput. Geom.}, 8(2):58--79, 2017.
\newblock \href {https://doi.org/10.20382/jocg.v8i2a4}
  {\path{doi:10.20382/jocg.v8i2a4}}.

\bibitem{b-sc2dm-21}
Henry~R. Brahana.
\newblock Systems of circuits on two-dimensional manifolds.
\newblock {\em Ann. of Math. (2)}, 23(2):144--168, 1921.
\newblock \href {https://doi.org/10.2307/1968030} {\path{doi:10.2307/1968030}}.

\bibitem{cce-msspe-13}
Sergio Cabello, Erin~W. Chambers, and Jeff Erickson.
\newblock Multiple-source shortest paths in embedded graphs.
\newblock {\em SIAM J. Comput.}, 42(4):1542--1571, 2013.
\newblock \href {https://doi.org/10.1137/120864271}
  {\path{doi:10.1137/120864271}}.

\bibitem{cdem-fotc-10}
Sergio Cabello, Matt Devos, Jeff Erickson, and Bojan Mohar.
\newblock Finding one tight cycle.
\newblock {\em ACM Trans. Algorithms}, 6(4):Article~61, 2010.
\newblock \href {https://doi.org/10.1145/1824777.1824781}
  {\path{doi:10.1145/1824777.1824781}}.

\bibitem{ccelw-scsh-08}
Erin~W. Chambers, {\'E}ric Colin~de Verdi{\`e}re, Jeff Erickson, Francis
  Lazarus, and Kim Whittlesey.
\newblock Splitting (complicated) surfaces is hard.
\newblock {\em Comput. Geom.}, 41(1-2):94--110, 2008.
\newblock \href {https://doi.org/10.1016/j.comgeo.2007.10.010}
  {\path{doi:10.1016/j.comgeo.2007.10.010}}.

\bibitem{cefn-mcsg-23}
Erin~W. Chambers, Jeff Erickson, Kyle Fox, and Amir Nayyeri.
\newblock Minimum cuts in surface graphs.
\newblock {\em SIAM J. Comput.}, 52(1):156--195, 2023.
\newblock \href {https://doi.org/10.1137/19M1291820}
  {\path{doi:10.1137/19M1291820}}.

\bibitem{cex-dwsp-15}
Hsien-Chih Chang, Jeff Erickson, and Chao Xu.
\newblock Detecting weakly simple polygons.
\newblock In {\em Proceedings of the 26th Annual ACM-SIAM Symposium on Discrete
  Algorithms (SODA)}, pages 1655--1670. Society for Industrial and Applied
  Mathematics, 2015.
\newblock \href {https://doi.org/10.1137/1.9781611973730.110}
  {\path{doi:10.1137/1.9781611973730.110}}.

\bibitem{c-scgsp-10}
{\'E}ric Colin~de Verdi{\`e}re.
\newblock Shortest cut graph of a surface with prescribed vertex set.
\newblock In {\em Proceedings of the 18th European Symposium on Algorithms
  (ESA), part 2}, volume 6347 of {\em Lecture Notes in Computer Science}, pages
  100--111. Springer-Verlag, Berlin, 2010.
\newblock \href {https://doi.org/10.1007/978-3-642-15781-3\_9}
  {\path{doi:10.1007/978-3-642-15781-3\_9}}.

\bibitem{c-ctgs-18}
{\'E}ric Colin~de Verdi{\`e}re.
\newblock Computational topology of graphs on surfaces.
\newblock In Jacob~E. Goodman, Joseph O'Rourke, and Csaba Toth, editors, {\em
  Handbook of Discrete and Computational Geometry}, chapter~23, pages 605--636.
  CRC Press LLC, third edition, 2018.
\newblock \href {https://doi.org/10.1201/9781315119601}
  {\path{doi:10.1201/9781315119601}}.

\bibitem{ce-tnpcs-10}
{\'E}ric Colin~de Verdi{\`e}re and Jeff Erickson.
\newblock Tightening nonsimple paths and cycles on surfaces.
\newblock {\em SIAM J. Comput.}, 39(8):3784--3813, 2010.
\newblock \href {https://doi.org/10.1137/090761653}
  {\path{doi:10.1137/090761653}}.

\bibitem{gs-mcmcr-97}
Maurits de~Graaf and Alexander Schrijver.
\newblock Making curves minimally crossing by {R}eidemeister moves.
\newblock {\em J. Combin. Theory Ser. B}, 70(1):134--156, 1997.
\newblock \href {https://doi.org/10.1006/jctb.1997.1754}
  {\path{doi:10.1006/jctb.1997.1754}}.

\bibitem{e-dgteg-03}
David Eppstein.
\newblock Dynamic generators of topologically embedded graphs.
\newblock In {\em Proceedings of the 14th Annual ACM-SIAM Symposium on Discrete
  Algorithms (SODA)}, pages 599--608. Society for Industrial and Applied
  Mathematics, 2003.

\bibitem{e-sntcd-11}
Jeff Erickson.
\newblock Shortest non-trivial cycles in directed surface graphs.
\newblock In {\em Proceedings of the 27th Annual Symposium on Computational
  Geometry (SOCG)}, pages 236--243. Association for Computing Machinery, 2011.
\newblock \href {https://doi.org/10.1145/1998196.1998231}
  {\path{doi:10.1145/1998196.1998231}}.

\bibitem{efl-hmcpf-18}
Jeff Erickson, Kyle Fox, and Luvsandondov Lkhamsuren.
\newblock Holiest minimum-cost paths and flows in surface graphs.
\newblock In {\em Proceedings of the 50th Annual ACM SIGACT Symposium on Theory
  of Computing (STOC)}, pages 1319--1332. Association for Computing Machinery,
  2018.
\newblock \href {https://doi.org/10.1145/3188745.3188904}
  {\path{doi:10.1145/3188745.3188904}}.

\bibitem{eh-ocsd-04}
Jeff Erickson and Sariel Har-Peled.
\newblock Optimally cutting a surface into a disk.
\newblock {\em Discrete Comput.\ Geom.}, 31(1):37--59, 2004.
\newblock \href {https://doi.org/10.1007/s00454-003-2948-z}
  {\path{doi:10.1007/s00454-003-2948-z}}.

\bibitem{en-mcsnc-11}
Jeff Erickson and Amir Nayyeri.
\newblock Minimum cuts and shortest non-separating cycles via homology covers.
\newblock In {\em Proceedings of the 22nd Annual ACM-SIAM Symposium on Discrete
  Algorithms (SODA)}, pages 1166--1176. Society for Industrial and Applied
  Mathematics, 2011.

\bibitem{ew-gohhg-05}
Jeff Erickson and Kim Whittlesey.
\newblock Greedy optimal homotopy and homology generators.
\newblock In {\em Proceedings of the 16th Annual ACM-SIAM Symposium on Discrete
  Algorithms (SODA)}, pages 1038--1046. Association for Computing Machinery,
  2005.

\bibitem{ew-csec-10}
Jeff Erickson and Pratik Worah.
\newblock Computing the shortest essential cycle.
\newblock {\em Discrete Comput. Geom.}, 44(4):912--930, 2010.
\newblock \href {https://doi.org/10.1007/s00454-010-9241-8}
  {\path{doi:10.1007/s00454-010-9241-8}}.

\bibitem{fm-pmcg-11}
Benson Farb and Dan Margalit.
\newblock {\em A primer on mapping class groups}, volume~49 of {\em Princeton
  Mathematical Series}.
\newblock Princeton University Press, Princeton, NJ, 2012.

\bibitem{f-sntcd-13}
Kyle Fox.
\newblock Shortest non-trivial cycles in directed and undirected surface
  graphs.
\newblock In {\em Proceedings of the 24th Annual ACM-SIAM Symposium on Discrete
  Algorithms (SODA)}, pages 352--364. Society for Industrial and Applied
  Mathematics, 2013.

\bibitem{fuladi2023short}
Niloufar Fuladi, Alfredo Hubard, and Arnaud de~Mesmay.
\newblock Short topological decompositions of non-orientable surfaces.
\newblock {\em Discrete Comput. Geom.}, 72(2):783--830, 2024.
\newblock \href {https://doi.org/10.1007/s00454-023-00580-3}
  {\path{doi:10.1007/s00454-023-00580-3}}.

\bibitem{gavoille2023minor}
Cyril Gavoille and Claire Hilaire.
\newblock Minor-universal graph for graphs on surfaces.
\newblock {\em arXiv preprint arXiv:2305.06673}, 2023.
\newblock \href {https://doi.org/10.48550/arXiv.2305.06673}
  {\path{doi:10.48550/arXiv.2305.06673}}.

\bibitem{g-frm-83}
Mikhael Gromov.
\newblock Filling {R}iemannian manifolds.
\newblock {\em J. Differential Geom.}, 18(1):1--147, 1983.
\newblock \href {https://doi.org/10.4310/jdg/1214509283}
  {\path{doi:10.4310/jdg/1214509283}}.

\bibitem{gross-tucker87-graph}
Jonathan~L. Gross and Thomas~W. Tucker.
\newblock {\em Topological graph theory}.
\newblock Wiley-Interscience Series in Discrete Mathematics and Optimization.
  John Wiley \& Sons, Inc., New York, 1987.

\bibitem{ips-hpgg-82}
Alon Itai, Christos~H. Papadimitriou, and Jayme~Luiz Szwarcfiter.
\newblock Hamilton paths in grid graphs.
\newblock {\em SIAM J. Comput.}, 11(4):676--686, 1982.
\newblock \href {https://doi.org/10.1137/0211056} {\path{doi:10.1137/0211056}}.

\bibitem{lpvv-ccpso-01}
Francis Lazarus, Michel Pocchiola, Gert Vegter, and Anne Verroust.
\newblock Computing a canonical polygonal schema of an orientable triangulated
  surface.
\newblock In {\em Proceedings of the 17th Annual Symposium on Computational
  Geometry (SOCG)}, pages 80--89. Association for Computing Machinery, 2001.
\newblock \href {https://doi.org/10.1145/378583.378630}
  {\path{doi:10.1145/378583.378630}}.

\bibitem{l-gem-82}
S\'{o}stenes Lins.
\newblock Graph-encoded maps.
\newblock {\em J. Combin. Theory Ser. B}, 32(2):171--181, 1982.
\newblock \href {https://doi.org/10.1016/0095-8956(82)90033-8}
  {\path{doi:10.1016/0095-8956(82)90033-8}}.

\bibitem{matouvsek2016untangling}
Ji\v{r}\'{\i} Matou\v{s}ek, Eric Sedgwick, Martin Tancer, and Uli Wagner.
\newblock Untangling two systems of noncrossing curves.
\newblock {\em Israel J. Math.}, 212(1):37--79, 2016.
\newblock \href {https://doi.org/10.1007/s11856-016-1294-9}
  {\path{doi:10.1007/s11856-016-1294-9}}.

\bibitem{mohar2009genus}
Bojan Mohar.
\newblock The genus crossing number.
\newblock {\em Ars Math. Contemp.}, 2(2):157--162, 2009.
\newblock \href {https://doi.org/10.26493/1855-3974.21.157}
  {\path{doi:10.26493/1855-3974.21.157}}.

\bibitem{rs-gm16e-03}
Neil Robertson and P.~D. Seymour.
\newblock Graph minors. {XVI}. {E}xcluding a non-planar graph.
\newblock {\em J. Combin. Theory Ser. B}, 89(1):43--76, 2003.
\newblock \href {https://doi.org/10.1016/S0095-8956(03)00042-X}
  {\path{doi:10.1016/S0095-8956(03)00042-X}}.

\bibitem{JGAA-580}
Marcus Schaefer and Daniel \v{S}tefankovi\v{c}.
\newblock The degenerate crossing number and higher-genus embeddings.
\newblock {\em J. Graph Algorithms Appl.}, 26(1):35--58, 2022.
\newblock \href {https://doi.org/10.7155/jgaa.00579}
  {\path{doi:10.7155/jgaa.00579}}.

\bibitem{stillwell-topology-93}
John Stillwell.
\newblock {\em Classical topology and combinatorial group theory}, volume~72 of
  {\em Graduate Texts in Mathematics}.
\newblock Springer-Verlag, New York, second edition, 1993.
\newblock \href {https://doi.org/10.1007/978-1-4612-4372-4}
  {\path{doi:10.1007/978-1-4612-4372-4}}.

\bibitem{t-egsnc-90}
Carsten Thomassen.
\newblock Embeddings of graphs with no short noncontractible cycles.
\newblock {\em J. Combin. Theory Ser. B}, 48(2):155--177, 1990.
\newblock \href {https://doi.org/10.1016/0095-8956(90)90115-G}
  {\path{doi:10.1016/0095-8956(90)90115-G}}.

\end{thebibliography}
\end{document}